\newtheorem{fact}{Fact}
\newcommand{\R}{\ensuremath{\mathbb{R}}} % The reals
\newcommand{\stackVC}{{\sc StackVC}}
\newcommand{\stackSP}{{\sc StackSP}}
\newcommand{\stackPrice}{{\sc Stack}}
\newcommand{\cov}{\mathcal{C}}
\begin{document}

% \title[short title]{title}
\title[Stackelberg Network Pricing Games]{Stackelberg Network Pricing Games}

% \author[ref]{Short author}{Author}
\author[lab1]{P. Briest}{Patrick Briest}
% \address[ref]{Address of authors with ref as reference}
\address[lab1]{Department of Computer Science, The University of Liverpool, United Kingdom.
  \newline Supported by DFG grant Kr 2332/1-2 within Emmy Noether program.}
\email{{patrick.briest,p.krysta}@liverpool.ac.uk}  %optional

\author[lab2]{M. Hoefer}{Martin Hoefer}
\address[lab2]{Department of Computer Science, RWTH Aachen University, Germany.
  \newline Supported by DFG Graduiertenkolleg 1298 ``AlgoSyn''.}	%optional
\email{mhoefer@informatik.rwth-aachen.de}  %optional

\author[lab1]{P. Krysta}{Piotr Krysta}

%% mandatory lists of keywords and classifications:
\keywords{Stackelberg Games, Algorithmic Pricing, Approximation Algorithms, Inapproximability.}
\subjclass{F.2 Analysis of Algorithms and Problem Complexity.}
% \titlecomment{OPTIONAL comment concerning the title, \eg, if a variant
% or an extended abstract of the paper has appeared elsewehere}
%%%%%%%%%%%%%%%%%%%%%%%%%%%%%%%%%%%%%%%%%%%%%%%%%%%%%%%%%%%%%%%%%%%%%%%%%%%

%% the abstract has to PRECEDE the command \maketitle:
%% be sure not to issue the \maketitle command twice!

\begin{abstract}
  We study a multi-player one-round game termed Stackelberg Network
  Pricing Game, in which a \emph{leader} can set prices for a subset
  of $m$ priceable edges in a graph. The other edges have a fixed
  cost. Based on the leader's decision one or more \emph{followers}
  optimize a polynomial-time solvable combinatorial minimization
  problem and choose a minimum cost solution satisfying their
  requirements based on the fixed costs and the leader's prices. The
  leader receives as revenue the total amount of prices paid by the
  followers for priceable edges in their solutions, and the problem is
  to find revenue maximizing prices. Our model extends several known
  pricing problems, including single-minded and unit-demand pricing,
  as well as Stackelberg pricing for certain follower problems like
  shortest path or minimum spanning tree. Our first main result is a
  tight analysis of a single-price algorithm for the single follower
  game, which provides a $(1+\varepsilon) \log m$-approximation for
  any $\varepsilon >0$. This can be extended to provide a
  $(1+\varepsilon )(\log k + \log m)$-approximation for the general
  problem and $k$ followers. The latter result is essentially best
  possible, as the problem is shown to be hard to approximate within
  $\mathcal{O}(\log^\varepsilon k + \log^\varepsilon m)$. If followers
  have demands, the single-price algorithm provides a $(1+\varepsilon
  )m^2$-approximation, and the problem is hard to approximate within
  $\mathcal{O}(m^\varepsilon)$ for some $\varepsilon >0$. Our second
  main result is a polynomial time algorithm for revenue maximization
  in the special case of Stackelberg bipartite vertex cover, which is
  based on non-trivial max-flow and LP-duality techniques. Our results
  can be extended to provide constant-factor approximations for any
  constant number of followers.
\end{abstract}

\maketitle

\stacsheading{2008}{133-142}{Bordeaux}
\firstpageno{133}

%% start the paper here:

\section{Introduction}
\label{intro}
Algorithmic pricing problems model the task of assigning revenue
maximizing prices to a retailer's set of products given some estimate
of the potential customers' preferences in purely
computational~\cite{Glynn02}, as well as strategic~\cite{Balcan+05}
settings. Previous work in this area has mostly focused on settings in
which these preferences are rather restricted, in the sense that
products are either {\em pure
  complements}~\cite{Balcan06,BK06,Guruswami05,HartlineKoltun} and
every customer is interested in exactly one subset of products or {\em
  pure
  substitutes}~\cite{Aggarwal04,BK07,Chawla+07,Glynn02,Guruswami05,HartlineKoltun},
in which case each customer seeks to buy only a single product out of
some set of alternatives. A customer's real preferences, however, are
often significantly more complicated than that and therefore pose some
additional challenges.

The modelling of consumer preferences has received considerable
attention in the context of {\em algorithmic mechanism
  design}~\cite{Nisan99} and {\em combinatorial
  auctions}~\cite{Cramton06}. The established models range from
relatively simple bidding languages to bidders that are represented by
oracles allowing certain types of queries, e.g., revealing the desired
bundle of items given some fixed set of prices. The latter would be a
somewhat problematic assumption in the theory of pricing algorithms,
where we usually assume to have access to a rather large number of
potential customers through some sort of sampling procedure and, thus,
are interested in preferences that allow for a compact kind of
representation.

In this paper we focus on customers that have non-trivial preferences,
yet can be fully described by their {\em types} and {\em budgets} and
do not require any kind of oracles. Assume that a company owns a
subset of the links in a given network. The remaining edges are owned
by other companies and have fixed publicly known prices and some
customer needs to purchase a path between two terminals in the
network. Since she is acting rational, she is going to buy the
shortest path connecting her terminals. How should we set the prices
on the priceable edges in order to maximize the company's revenue?
What if there is another customer, who needs to purchase, e.g., a
minimum cost spanning tree?

This type of pricing problem, in which preferences are implicitly
defined in terms of some optimization problem, is usually referred to
as {\em Stackelberg pricing}~\cite{Stackelberg34}. In the standard
2-player form we are given a {\em leader} setting the prices on a
subset of the network and a {\em follower} seeking to purchase a
min-cost network satisfying her requirements. We proceed by formally
defining the model before stating our results.

\subsection{Model and Notation}
In this paper we consider the following class of multi-player
one-round games. Let $G=(V,E)$ be a multi-graph. There are two types
of players in the game, one \emph{leader} and one or more
\emph{followers}. We consider two classes of \emph{edge} and
\emph{vertex games}, in which either the edges or the vertices have
costs. For most of the paper, we will consider edge games, but the
definitions and results for vertex games follow analogously. In an
edge game, the edge set $E$ is partitioned into two sets $E = E_p \cup
E_f$ with $E_p \cap E_f = \emptyset$. For each \emph{fixed-price} edge
$e \in E_f$ there is a fixed cost $c(e) \ge 0$. For each
\emph{priceable} edge $e \in E_p$ the leader can specify a price $p(e)
\ge 0$. We denote the number of priceable edges by $m = |E_p|$. Each
follower $i=1,\ldots,k$ has a set $\mathcal{S}_i \subset 2^E$ of
\emph{feasible subnetworks}. The \emph{weight} $w(S)$ of a subnetwork
$S \in \mathcal{S}_i$ is given by the costs of fixed-price edges and
the price of priceable edges,
\[ w(S) = \sum_{e \in S \cap E_f} c(e) + \sum_{e \in S \cap E_p}
p(e). \] 
The \emph{revenue} $r(S)$ of the leader from subnetwork $S$ is given
by the prices of the priceable edges that are included in $S$, i.e.,
\[ r(S) = \sum_{e \in S \cap E_p} p(e).\]
Throughout the paper we assume that for any price function $p$ every
follower $i$ can in polynomial time find a subnetwork
$S_i^*(p)$ of minimum weight. Our interest is to find the pricing
function $p^*$ for the leader that generates maximum revenue,
i.e.,
\[ p^* = \arg\max_p \sum_{i=1}^k r(S_i^*(p)).\]
We denote the value of this maximum revenue by $r^*$. To guarantee
that the revenue is bounded and the optimization problem is
non-trivial, we assume that there is at least one feasible subnetwork
for each follower $i$ that is composed only of fixed-price edges. In
order to avoid technicalities, we assume w.l.o.g. that among
subnetworks of identical weight the follower always chooses the one
with higher revenue for the leader. It is not difficult to see that in
the 2-player case we also need followers with a large number of
feasible subnetworks in order to make the problem interesting.

\begin{proposition}
\label{prop}
Given follower $j$ and a fixed subnetwork $S_j\in \mathcal{S}_j$, we can compute prices $p$ with $w(S_j)=\min _{S\in \mathcal{S}_j} w(S)$ maximizing $r(S_j)$ or decide that such prices do not exist in polynomial time. In the 2-player game, if $|\mathcal{S}|=\mathcal{O}(poly(m))$, revenue maximization can be done in polynomial time.
\end{proposition}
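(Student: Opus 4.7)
My plan is to encode, for each fixed candidate subnetwork $S_j$, the requirement that prices $p$ make $S_j$ the follower's minimum-weight choice as a linear program in the price variables $p(e)\ge 0$, $e\in E_p$, and to handle its (possibly exponential) number of constraints via the ellipsoid method, using the follower's own optimization algorithm as the separation oracle.

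For the first claim I would fix follower $j$ and the target $S_j\in\mathcal{S}_j$. The condition $w(S_j)=\min_{S\in\mathcal{S}_j} w(S)$ is equivalent to the family of inequalities $w(S_j)\le w(S)$ for every $S\in\mathcal{S}_j$, and each of these is linear in the variables $p(e)$, $e\in E_p$, since the fixed-edge costs of $S_j$ and $S$ contribute only constants. The objective $r(S_j)=\sum_{e\in S_j\cap E_p} p(e)$ is likewise linear, and the paper's tie-breaking convention (among equal-weight subnetworks the follower picks the one maximizing leader revenue) lets me use $\le$ instead of a strict inequality. This gives an LP with $m$ variables whose objective is bounded: by assumption there is a feasible subnetwork $S_f\in\mathcal{S}_j$ consisting only of fixed-price edges, and its constraint forces $\sum_{e\in S_j\cap E_p}p(e)\le w(S_f)-\sum_{e\in S_j\cap E_f} c(e)$.

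The main obstacle is the exponential number of constraints, but this is exactly what the ellipsoid method with a separation oracle is designed for. The oracle I would use is the follower's own algorithm, which by assumption runs in polynomial time: given a candidate $p$, compute $S^*(p)\in\argmin_{S\in\mathcal{S}_j} w(S)$; if $w(S^*(p))\ge w(S_j)$ then $p$ is feasible, and otherwise $S^*(p)$ furnishes a violated constraint. The ellipsoid method then either certifies infeasibility --- meaning no prices make $S_j$ the follower's choice --- or returns an optimal price vector in polynomial time.

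For the second claim, in the 2-player case with $|\mathcal{S}|=\mathcal{O}(\mathrm{poly}(m))$ I would iterate the procedure above over every $S\in\mathcal{S}$ as the candidate follower choice and output the price vector yielding the largest revenue. Since the number of candidates is polynomial and each subproblem is solved in polynomial time, the overall revenue maximization runs in polynomial time.
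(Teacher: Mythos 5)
Your proposal is correct and matches the paper's own argument: the paper formulates exactly the same LP over price variables with constraints $w(S_j)\le w(S)$ for all $S\in\mathcal{S}_j$, handles the exponentially many constraints via the follower's polynomial-time optimization as a separation oracle, and proves the second claim by enumerating the polynomially many subnetworks and solving the LP for each. Your added remarks on tie-breaking and boundedness are consistent with the paper's assumptions.
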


The proof of Proposition~\ref{prop} will appear in the full
version. In general we will refer to the revenue optimization problem
by \stackPrice. Note that our model extends the previously considered
pricing models and is essentially equivalent to pricing with general
valuation functions, a problem that has independently been considered
in~\cite{BalcanTR}. Every general valuation function can be expressed
in terms of Stackelberg network pricing on graphs, and our algorithmic
results apply in this setting as well.

\subsection{Previous Work and New Results}
\label{results}
The single-follower shortest path Stackelberg pricing problem ({\sc
  StackSP}) has first been considered by Labb\'{e} et
al.~\cite{Labbe98}, who derive a bilevel LP formulation of the problem
and prove NP-hardness. Roch et al.~\cite{Roch05} present a first
polynomial time approximation algorithm with a provable performance
guarantee, which yields logarithmic approximation ratios. Bouhtou et
al.~\cite{Bouhtou+} extend the problem to multiple (weighted)
followers and present algorithms for a restricted shortest path
problem on parallel links. For an overview of most of the initial work
on Stackelberg network pricing the reader is referred
to~\cite{Hoesel06}. A different line of research has been
investigating the application of Stackelberg pricing to network
congestion games in order to obtain low congestion Nash equilibria for
sets of selfish followers~\cite{Cole03,Roughgarden04,Swamy07}.

More recently, Cardinal et al.~\cite{Cardinal07} initiated the
investigation of the corresponding minimum spanning tree ({\sc
  StackMST}) game, again obtaining a logarithmic approximation
guarantee and proving APX-hardness. Their {\em single-price
  algorithm}, which assigns the same price to all priceable edges,
turns out to be even more widely applicable and yields similar
approximation guarantees for any matroid based Stackelberg game.

The first result of our paper is a generalization of this result to
general Stackelberg games. The previous limitation to matroids stems
from the difficulty to determine the necessarily polynomial number of
candidate prices that can be tested by the algorithm. We develop a
novel characterization of the small set of {\em threshold prices} that
need to be tested and obtain a polynomial time
$(1+\varepsilon)H_m$-approximation (where $H_m$ denotes the $m$'th
harmonic number) for arbitrary $\varepsilon >0$, which turns out to be
perfectly tight for shortest path as well as minimum spanning tree
games. This result is found in Section~\ref{single-price}.

We then extend the analysis to multiple followers, in which case the
approximation ratio becomes $(1+\varepsilon)(H_k+H_m)$. This can be
shown to be essentially best possible by an approximation preserving
reduction from single-minded combinatorial
pricing~\cite{Demaine06+}. Extending the problem even further, we also
look at the case of multiple {\em weighted} followers, which arises
naturally in network settings where different followers come with
different routing demands. It has been conjectured before that no
approximation essentially better than the number of followers is
possible in this scenario. We disprove this conjecture by presenting
an alternative analysis of the single-price algorithm resulting in an
approximation ratio of $(1+\varepsilon )m^2$. Additionally, we derive
a lower bound of $\mathcal{O}(m^{\varepsilon})$ for the weighted
player case. This resolves a previously open problem
from~\cite{Bouhtou+}. The results on multiple followers are found in
Section~\ref{multiFollowers}.

The generic reduction from single-minded to Stackelberg pricing yields
a class of networks in which we can price the vertices on one side of
a bipartite graph and players aim to purchase minimum cost vertex
covers for their sets of edges. This motivates us to return to the
classical Stackelberg setting and consider the 2-player bipartite
vertex cover game ({\sc StackVC}). As it turns out, this variation of
the game allows polynomial-time algorithms for exact revenue
maximization using non-trivial algorithmic techniques. We first
present an upper bound on the possible revenue in terms of the
min-cost vertex cover not using any priceable vertices and the minimum
portion of fixed cost in any possible cover. Using iterated max-flow
computations, we then determine a pricing with total revenue that
eventually coincides with our upper bound. These results are found in
Section~\ref{StackVC}.

Finally, Section~\ref{conclusions} concludes and presents several
intriguing open problems for further research. Some of the proofs have
been omitted due to space limitations.
%The rest of the paper is organized as
%follows. Sections~\ref{single-price} through~\ref{StackVC} contain our
%results on the single-price algorithm and the bipartite vertex cover
%game. 

\section{A Single-Price Algorithm for a Single Follower}
\label{single-price}
Let us assume that we are faced with a single follower and let $c_0$
denote the cost of a cheapest feasible subnetwork for the follower not
containing any of the priceable edges. Clearly, we can compute $c_0$
by assigning price $+\infty$ to all priceable edges and simulating the
follower on the resulting network. The {\em single-price algorithm}
proceeds as follows. For $j=0,\ldots ,\lceil \log c_0\rceil$ it
assigns price $p_j=(1+\varepsilon )^j$ to all priceable edges and
determines the resulting revenue $r(p_j)$. It then simply returns the
pricing that results in maximum revenue. We present a logarithmic
bound on the approximation guarantee of the single-price algorithm.

\begin{theorem}
\label{t:single-price}
Given any $\varepsilon >0$, the single-price algorithm computes an
$(1+\varepsilon)H_m$-approximation with respect to $r^*$, the revenue
of an optimal pricing. 
\end{theorem}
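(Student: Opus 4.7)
The plan is to combine a threshold lemma bounding the follower's response under uniform pricing with a standard harmonic-averaging argument. To set things up, fix an optimal pricing $p^*$ with revenue $r^*$ and follower response $S^* = S_1^*(p^*)$, and order the priceable edges of $S^*$ as $e_1, \ldots, e_\ell$ with $p^*(e_1) \geq \cdots \geq p^*(e_\ell)$, so $\ell \leq m$ and $r^* = \sum_{i=1}^\ell p^*(e_i)$. Since the follower always has the fall-back option of an all-fixed subnetwork of cost $c_0$, every $p^*(e_i) \leq c_0$, and hence the algorithm tests a grid value $q_i = (1+\varepsilon)^{j_i}$ satisfying $p^*(e_i)/(1+\varepsilon) \leq q_i \leq p^*(e_i)$.

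The central step, which I expect to be the main obstacle, is the threshold lemma: at uniform price $q_i$ the follower's response contains at least $i$ priceable edges, so the generated revenue is at least $i\,q_i \geq i\,p^*(e_i)/(1+\varepsilon)$. I would attempt this via the two optimality inequalities
\begin{align*}
c(S_1^*(q_i) \cap E_f) + q_i\,|S_1^*(q_i) \cap E_p| &\leq c(S^* \cap E_f) + q_i\,\ell,\\
c(S^* \cap E_f) + r^* &\leq c(S_1^*(q_i) \cap E_f) + \sum_{e \in S_1^*(q_i) \cap E_p} p^*(e),
\end{align*}
the first from optimality of $S_1^*(q_i)$ under uniform pricing $q_i$ and the second from optimality of $S^*$ under $p^*$. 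Adding them cancels the fixed-cost terms and yields a relation linking $|S_1^*(q_i) \cap E_p|$, $\ell$ and the $p^*$-values on the right, but the sum by itself only bounds $q_i\,|S_1^*(q_i) \cap E_p|$ from above. Extracting the count bound $|S_1^*(q_i) \cap E_p| \geq i$ requires exploiting the sorting $p^*(e_1) \geq \cdots \geq p^*(e_i) \geq q_i$ to split the right-hand sum according to whether $p^*(e) \geq q_i$; this is precisely the role of the paper's \emph{novel characterization of the small set of threshold prices that need to be tested}, which replaces the matroid-exchange argument used for \stackMST.

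Given the threshold lemma, the rest follows by standard harmonic averaging. The algorithm's revenue is at least $\frac{1}{1+\varepsilon}\max_{1 \leq i \leq \ell} i\,p^*(e_i)$, and a contradiction argument---if $i\,p^*(e_i) < r^*/H_\ell$ held for every $i$, then $r^* = \sum_{i} p^*(e_i) < (r^*/H_\ell)\sum_{i} 1/i = r^*$---gives $\max_i i\,p^*(e_i) \geq r^*/H_\ell \geq r^*/H_m$. Hence the algorithm achieves revenue at least $r^*/((1+\varepsilon)H_m)$, proving the claimed guarantee.
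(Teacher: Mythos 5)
The harmonic-averaging half of your plan is sound, but the ``threshold lemma'' it rests on is false, and no splitting of the right-hand sum in your two optimality inequalities can rescue it, because the conclusion itself fails. Concretely, take a shortest-path instance with two parallel $s$--$t$ routes: a single fixed edge of cost $10$, and a path consisting of two priceable edges $b,c$ with no fixed cost, so $m=2$ and $c_0=10$. The pricing $p^*(b)=10$, $p^*(c)=0$ is optimal: with the paper's tie-breaking the follower buys $\{b,c\}$ and $r^*=10$. Sorting the priceable edges of $S^*$ gives $p^*(e_1)=10$, $p^*(e_2)=0$, so for $i=1$ your lemma asserts that at any uniform price $q_1\in[10/(1+\varepsilon),10]$ the follower buys at least one priceable edge. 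But at such a price the priceable path costs $2q_1\ge 20/(1+\varepsilon)>10$ for $\varepsilon<1$, so the follower takes the fixed edge, buys zero priceable edges, and the revenue is $0$, not at least $1\cdot q_1$. Both of your optimality inequalities hold here (at $q_1=10$ they read $10\le 20$ and $10\le 10$), which shows they cannot imply the count bound however you split the sums. The underlying problem is that the way an optimal pricing distributes $r^*$ over the edges of $S^*$ is arbitrary (here all of it sits on one edge), so the sorted per-edge values $p^*(e_i)$ carry no information about the follower's response to uniform prices, and the quantities $i\cdot p^*(e_i)$ you feed into the averaging argument need not be attainable by any single price. (The theorem is not contradicted in this example --- the algorithm tests a price near $5$ and earns close to $10$ --- it is only your intermediate claim that fails.)

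This is exactly where the paper departs from your route: it never compares against the per-edge prices of an optimal solution. It defines the thresholds $\theta_j$ intrinsically as the largest uniform price at which the follower's response contains at least $j$ priceable edges, characterizes the true thresholds as the vertices of the upper convex hull of the points $(j,\Delta_j)$ with $\Delta_j=c_0-c_j$ (Lemmas~\ref{t:hull} and~\ref{t:thresholds}, giving $\theta_{i_k}=(\Delta_{i_k}-\Delta_{i_{k-1}})/(i_k-i_{k-1})$), and replaces your per-edge decomposition of $r^*$ by the coarse bound $r^*\le\Delta_m$ (Fact~\ref{fact:opt}). Harmonic averaging is then applied to the telescoping sum $\sum_{k}(i_k-i_{k-1})\theta_{i_k}=\Delta_{i_\ell}$, where each term $i_k\theta_{i_k}$ genuinely is revenue obtainable at a uniform price that the algorithm tests up to a factor $(1+\varepsilon)$. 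To repair your proof you would have to replace the sorted optimal prices by these hull increments, i.e., reconstruct the paper's characterization of the candidate prices; as written, the step from $r^*$ to the algorithm's revenue does not go through.
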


\subsection{Analysis}
The single-price algorithm has previously been applied to a number of
different combinatorial pricing
problems~\cite{Aggarwal04,Guruswami05}. The main issue in analyzing
its performance guarantee for Stackelberg pricing is to determine the
right set of candidate prices. We first derive a precise
characterization of these candidates and then argue that the geometric
sequence of prices tested by the algorithm is a good enough
approximation.
Slightly abusing notation, we let $p$ refer to both price $p$ and the
assignment of this price to all priceable edges. If there exists a
feasible subnetwork for the follower that uses at least $j$ priceable
edges, we let
\[ \theta_j=\max \Bigl\{ p\, \Bigr| \Bigl. \, |S^{\star}(p)\cap E_p|\ge j\Bigr\}\]
be the largest price at which such a subnetwork is chosen. If no
feasible subnetwork with at least $j$ priceable edges exists, we set
$\theta_j=0$. As we shall see, these thresholds are the key to prove
Theorem~\ref{t:single-price}. 

We want to derive an alternative characterization of the
values of $\theta_j$. For each $1 \le j\le m$ we let $c_j$ refer to
the minimum sum of prices of fixed-price edges in any feasible
subnetwork containing at most $j$ priceable edges, formally
\[ c_j=\min \Bigl\{\sum _{e\in S\cap E_f}f_e\, \Bigr| \Bigl. \, S\in \mathcal{S}\,
:\, |S\cap E_p|\le j\Bigr\},\] 
and $\Delta_j=c_0-c_j$. For ease of notation let $\Delta
_0=0$. Consider the set of points $(0,\Delta _0)$, $(1,\Delta _1),\ldots
,(m,\Delta _m)$ on the plane. By $\mathcal{H}$ we refer to a minimum
selection of points spanning the upper convex hull of the point
set. It is a straightforward geometric observation that we can define
$\mathcal{H}$ as follows:

\begin{fact}
\label{fact:hul}
Point $(j,\Delta _j)$ belongs to $\mathcal{H}$ if and only if
%
%\[
$\min_{i<j} \frac{\Delta_j-\Delta_i}{j-i}> \max _{j<k}
\frac{\Delta_k-\Delta_j}{k-j}.$
%\]
%
\end{fact}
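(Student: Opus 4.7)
The plan is to reduce Fact~\ref{fact:hul} to the standard characterization of extreme vertices of a planar upper convex hull. A minimum set $\mathcal{H}$ spanning the upper hull of $\{(j,\Delta_j) : 0\le j\le m\}$ consists precisely of those points that are vertices of the hull, and such a vertex is equivalently defined by the property that it lies \emph{strictly} above the chord joining any two other points $(i,\Delta_i)$, $(k,\Delta_k)$ with $i<j<k$. (Any point violating this can be deleted from the spanning set without altering the piecewise-linear envelope, and any point satisfying it cannot.)

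The key elementary step is to observe that $(j,\Delta_j)$ lies strictly above the chord from $(i,\Delta_i)$ to $(k,\Delta_k)$ precisely when
\[ \frac{\Delta_j-\Delta_i}{j-i} \;>\; \frac{\Delta_k-\Delta_j}{k-j}, \]
as one verifies by evaluating the chord at abscissa $j$, which gives $\Delta_i + \tfrac{j-i}{k-i}(\Delta_k-\Delta_i)$, and then requiring $\Delta_j$ to exceed this value and rearranging. Quantifying over all $i<j$ and all $k>j$ collapses the family of pairwise inequalities into a single worst-case condition
\[ \min_{i<j}\frac{\Delta_j-\Delta_i}{j-i} \;>\; \max_{k>j}\frac{\Delta_k-\Delta_j}{k-j}, \]
which is exactly what Fact~\ref{fact:hul} asserts.

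For the ``only if'' direction I would argue by contrapositive: if some pair $i^*<j<k^*$ witnesses $\tfrac{\Delta_j-\Delta_{i^*}}{j-i^*}\le\tfrac{\Delta_{k^*}-\Delta_j}{k^*-j}$, then $(j,\Delta_j)$ lies on or below the segment between $(i^*,\Delta_{i^*})$ and $(k^*,\Delta_{k^*})$, so dropping $(j,\Delta_j)$ from the spanning set still yields a piecewise-linear envelope that dominates it, contradicting minimality of $\mathcal{H}$. Conversely, if the slope inequality holds, then $(j,\Delta_j)$ strictly dominates every chord from the left to the right of its abscissa, hence cannot be written as a convex combination of the remaining points, so $(j,\Delta_j)$ must appear in $\mathcal{H}$.

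The only real subtlety, which I see as the main (mild) obstacle, concerns the boundary indices $j=0$ and $j=m$ for which one of the two index sets in the min/max is empty. Adopting the standard conventions $\min\emptyset=+\infty$ and $\max\emptyset=-\infty$ makes the condition trivially true in both cases, consistent with the familiar fact that the leftmost and rightmost points of a finite planar point set always belong to the vertex set of its upper convex hull.
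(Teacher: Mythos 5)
Your proposal is correct: the paper offers no proof of Fact~\ref{fact:hul}, dismissing it as ``a straightforward geometric observation,'' and your chord-versus-slope argument (a point with distinct abscissae is a vertex of the upper hull iff it lies strictly above every chord joining a point to its left and a point to its right, which rearranges to the stated min/max slope condition) is exactly the standard reasoning being appealed to. Your handling of the boundary indices via $\min\emptyset=+\infty$, $\max\emptyset=-\infty$ is also consistent with how the paper later uses the fact, e.g.\ that $(m,\Delta_m)$ always belongs to $\mathcal{H}$, so $\Delta_{i_{\ell}}=\Delta_m$.
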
 

We now return to the candidate prices. By definition we have that
$\theta_1 \ge \theta_2 \ge \cdots \ge \theta_m$. We say that
$\theta_j$ is {\em true threshold value} if $\theta _j>\theta _{j+1}$,
i.e., if at price $\theta_j$ the subnetwork chosen by the follower
contains exactly $j$ priceable edges. Let $i_1 < i_2 < \cdots
<i_{\ell}$ denote the indices, such that $\theta_{i_k}$ are true
threshold values and for ease of notation define $i_0=0$. For an
example, see Figure~\ref{fig:convexHull}.

\begin{lemma}
\label{t:hull}
$\theta_j$ is true threshold value if and only if $(j,\Delta_j)$
belongs to $\mathcal{H}$.
\end{lemma}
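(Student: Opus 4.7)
The plan is to prove both directions simultaneously by reformulating them via the lower envelope of the $m+1$ affine functions $\ell_k(p) := c_k + pk$, for $k = 0, 1, \ldots, m$.

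First, I would establish the identity
\[ g(p) \;:=\; \min_{S \in \mathcal{S}} w(S; p) \;=\; \min_{0 \le k \le m}(c_k + pk) \quad \text{for every } p \ge 0. \]
The lower bound is immediate: any feasible $S$ with $|S \cap E_p| = k$ has fixed cost at least $c_k$, so $w(S;p) \ge c_k + pk$. For the upper bound, the subnetwork witnessing $c_k$ uses at most $k$ priceable edges and hence has weight at most $c_k + pk$ at price $p$. Thus $g$ is the concave piecewise linear lower envelope of the lines $\{\ell_k\}_{k=0}^m$.

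The key subclaim is that whenever $\ell_j(p) = g(p)$ for some $p > 0$, there is a feasible subnetwork with \emph{exactly} $j$ priceable edges attaining weight $g(p)$. If not, the witness of $c_j$ uses some $k < j$ priceable edges; monotonicity of $c$ then forces $c_k = c_j$, and so $\ell_k(p) = c_j + pk < c_j + pj = g(p)$, contradicting the minimality of $g$. Combined with the tiebreak convention, this yields
\[ |S^{\star}(p) \cap E_p| \;=\; \max\{\,k \,:\, \ell_k(p) = g(p)\,\}. \]
Consequently, $j$ is realized as the follower's number of priceable edges for some $p \ge 0$ --- equivalently, $\theta_j$ is a true threshold value --- iff $\ell_j$ is a facet of the lower envelope, i.e., lies strictly below all other $\ell_k$'s on some non-degenerate open interval of $p$.

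It is then a standard fact of convex geometry that $\ell_j$ is such a facet iff $(j, c_j)$ is a strict vertex of the lower convex hull of $\{(k, c_k) : 0 \le k \le m\}$. Under the reflection $(k, c_k) \mapsto (k, c_0 - c_k) = (k, \Delta_k)$, this is precisely the condition that $(j, \Delta_j)$ is a strict vertex of the upper convex hull of $\{(k, \Delta_k)\}_{k=0}^m$, which is exactly the inequality characterising membership in $\mathcal{H}$ in Fact~\ref{fact:hul}. The main obstacle is the subclaim in the second step: one must reconcile the ``$\le k$'' in the definition of $c_k$ with the tiebreak convention so that an active line $\ell_j$ is indeed realized by a follower subnetwork using exactly $j$ priceable edges. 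Once this identification is in place, the rest is a routine translation between lower envelopes of lines and strict vertices of the associated convex hull.
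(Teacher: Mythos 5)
Your proposal is correct and takes essentially the same route as the paper: both arguments reduce the follower's behaviour under a uniform price $p$ to comparing the affine functions $c_k+pk$ (equivalently $\Delta_k-kp$) over $k$, and identify true thresholds with those indices $j$ that are strict minimizers on a nondegenerate price interval, which is exactly the condition of Fact~\ref{fact:hul}. The differences are presentational only --- you phrase the two directions as lower-envelope/convex-hull duality and make explicit the ``at most $j$ versus exactly $j$'' witness issue and the use of the tie-breaking convention, points the paper's proof handles implicitly.
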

\begin{proof}
"$\Rightarrow$" Let $\theta_j$ be true threshold value, i.e., at
price $\theta_j$ the chosen subnetwork contains exactly $j$ priceable
edges. We observe that at any price $p$ the cheapest subnetwork
containing $j$ priceable edges has cost $c_j + j \cdot p = c_0 -
\Delta_j + j \cdot p$. Thus, at price $\theta_j$ it must be the case
that $\Delta_j - j\cdot \theta_j \ge \Delta_i - i\cdot \theta_j$ for
all $i<j$ and $\Delta_j - j\cdot \theta_j > \Delta_k - k \cdot
\theta_j$ for all $j<k$. It follows that
\[ \min_{i<j} \frac{\Delta_j-\Delta_i}{j-i}\ge \theta_j> \max_{j<k}
\frac{\Delta_k-\Delta_j}{k-j},\] 
and, thus, we have that $(j,\Delta_j)$ belongs to $\mathcal{H}$.

"$\Leftarrow$" Assume now that $(j,\Delta_j)$ belongs to $\mathcal{H}$
and let
\[ p=\min_{i<j} \frac{\Delta_j-\Delta_i}{j-i}.\]
Consider any $k<j$. It follows that
%
%\[ \
$\Delta_k - k \cdot p=\Delta_j - j \cdot p-(\Delta_j -
\Delta_k)+(j-k)p \le \Delta_j-j\cdot p,$
%\]
%
since $p \le (\Delta_j-\Delta_k)/(j-k)$ and, thus, the network chosen
at price $p$ cannot contain less than $j$ priceable edges. Analogously,
let $k>j$. Using $p>(\Delta_k-\Delta_j)/(k-j)$ we obtain
%
%\[ 
$\Delta_k - k \cdot p = \Delta_j - j \cdot p + (\Delta_k -
\Delta_j)-(k-j)p < \Delta_j - j \cdot p,$
%\]
%
and, thus, the subnetwork chosen at price $p$ contains exactly $j$
priceable edges. We conclude that $\theta_j$ is a true threshold.
\end{proof}

\begin{figure}
\centering
\epsfig{file=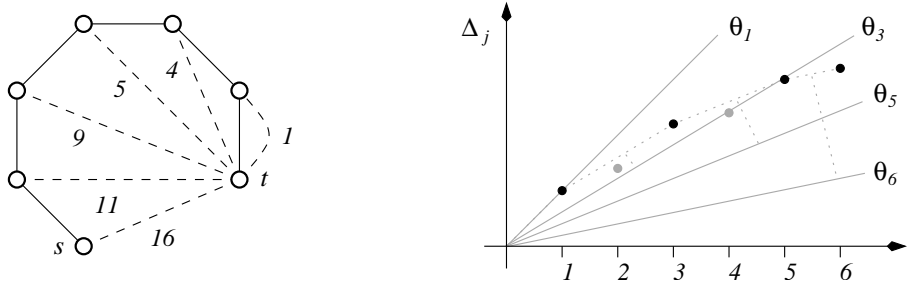,width=12cm}
\caption{\label{fig:convexHull} A geometric interpretation of (true)
  threshold values $\theta_j$. The follower seeks to purchase a
  shortest path from $s$ to $t$, dashed edges are fixed-cost.}
\end{figure}
 
It is not difficult to see that the price $p$ defined in the second
part of the proof of Lemma~\ref{t:hull} is precisely the threshold
value $\theta_j$. Let $\theta_{i_k}$ be any true threshold. Since
points $(i_0,\Delta_{i_0}),\ldots ,(i_{\ell},\Delta_{i_{\ell}})$
define the convex hull we can write that $\min_{i<i_k} (\Delta_{i_k} -
\Delta_i)/(i_k-i) = (\Delta_{i_k} -\Delta_{i_{k-1}})/(i_k-i_{k-1})$.
%
%\[
%\min_{i<i_k} \frac{\Delta_{i_k} - \Delta_i}{i_k-i} =
%\frac{\Delta_{i_k} -\Delta_{i_{k-1}}}{i_k-i_{k-1}}.
%\]
% 
We state this important fact again in the following lemma.

\begin{lemma}
\label{t:thresholds}
For all $1 \le k \le \ell$ it holds that
$\theta_{i_k}=\frac{\Delta_{i_k} - \Delta_{i_{k-1}}}{i_k-i_{k-1}}$.
\end{lemma}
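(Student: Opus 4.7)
The plan is to combine the two directions of Lemma~\ref{t:hull} to first show that $\theta_j = \min_{i<j}\frac{\Delta_j - \Delta_i}{j-i}$ whenever $\theta_j$ is a true threshold, and then to invoke the geometry of the upper convex hull $\mathcal{H}$ to rewrite this minimum as the slope between the adjacent hull vertices $(i_{k-1},\Delta_{i_{k-1}})$ and $(i_k,\Delta_{i_k})$.

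For the first step, fix a true threshold $\theta_j$, so that $(j,\Delta_j)\in\mathcal{H}$ by Lemma~\ref{t:hull}. The $\Rightarrow$ direction of the proof of Lemma~\ref{t:hull} already establishes $\min_{i<j}\frac{\Delta_j-\Delta_i}{j-i}\ge \theta_j$, since at price $\theta_j$ the chosen subnetwork uses exactly $j$ priceable edges and therefore costs no more than any alternative using fewer priceable edges. For the reverse inequality, set $p = \min_{i<j}\frac{\Delta_j-\Delta_i}{j-i}$. The $\Leftarrow$ direction of the same proof shows that at price $p$ the follower again selects a subnetwork with exactly $j$ priceable edges (the strict inequality on the $k>j$ side is what guarantees ``exactly'' rather than ``at least''). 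Since $\theta_j$ is by definition the largest price at which some feasible subnetwork with at least $j$ priceable edges is chosen, we obtain $p\le\theta_j$, and equality follows.

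Specializing this equality to $j=i_k$ yields $\theta_{i_k} = \min_{i<i_k}\frac{\Delta_{i_k} - \Delta_i}{i_k - i}$, so it only remains to verify that this minimum is attained at $i=i_{k-1}$. I would use two standard properties of the upper convex hull $\mathcal{H}$. First, the edge slopes along $\mathcal{H}$ are strictly decreasing from left to right, so for any earlier hull vertex $(i_{k'},\Delta_{i_{k'}})$ with $k'<k-1$ the slope $\frac{\Delta_{i_k}-\Delta_{i_{k'}}}{i_k-i_{k'}}$ is a weighted average of hull edge slopes strictly larger than $\frac{\Delta_{i_k}-\Delta_{i_{k-1}}}{i_k-i_{k-1}}$, and hence strictly larger than the latter. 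Second, any non-hull point $(i,\Delta_i)$ with $i<i_k$ lies strictly below $\mathcal{H}$, so its back-slope to $(i_k,\Delta_{i_k})$ is at least the back-slope from the corresponding hull value at abscissa $i$, which by the previous observation is at least $\frac{\Delta_{i_k}-\Delta_{i_{k-1}}}{i_k-i_{k-1}}$. Together these yield the claimed identity.

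The main obstacle is minor and has essentially already been handled: one must make sure the $\Leftarrow$ argument in Lemma~\ref{t:hull} genuinely delivers exactly (not merely at least) $j$ priceable edges, so that the price $p$ actually qualifies as a candidate for the $\max$ defining $\theta_j$; this is precisely the role of the strict inequality already derived there.
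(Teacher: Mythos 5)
Your proposal is correct and follows essentially the same route as the paper: the paper's argument is exactly the remark preceding the lemma, namely that the price $p=\min_{i<j}(\Delta_j-\Delta_i)/(j-i)$ from the second part of the proof of Lemma~\ref{t:hull} equals $\theta_j$, and that the hull structure places this minimum at $i=i_{k-1}$. You merely spell out the details the paper leaves as ``not difficult to see'' (both inequalities $p\le\theta_j\le p$ and the decreasing-slope/below-hull argument), which is fine.
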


From the fact that points $(i_0,\Delta_{i_0}),\ldots
,(i_{\ell},\Delta_{i_{\ell}})$ define the convex hull we know that
$\Delta_{i_{\ell}} = \Delta_m$, i.e., $\Delta_{i_{\ell}}$ is the
largest of all $\Delta$-values. On the other hand, each $\Delta_j$
describes the maximum revenue that can be made from a subnetwork with
at most $j$ priceable edges and, thus, $\Delta_m$ is clearly an upper
bound on the revenue made by an optimal price assignment. 

\begin{fact}
\label{fact:opt}
It holds that $r^* \le \Delta_{i_{\ell}}$.
\end{fact}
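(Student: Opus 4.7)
The plan is to bound the optimal revenue by exploiting the fact that the follower always has the option of buying a feasible subnetwork consisting only of fixed-price edges, whose cost is $c_0$.

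First I would fix an optimal price function $p^*$ achieving revenue $r^*$, and let $S^*=S^*(p^*)$ be the subnetwork the follower chooses, with $j=|S^*\cap E_p|$. By definition of $c_j$ as the minimum fixed-cost portion of any feasible subnetwork using at most $j$ priceable edges, I get the lower bound $w(S^*)\ge c_j + r^*$, since the priceable contribution to $w(S^*)$ is exactly $r^*$ and the fixed-price contribution is at least $c_j$.

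Second, I would use the follower's optimality together with the standing assumption that each follower has a feasible subnetwork composed only of fixed-price edges. That subnetwork has weight $c_0$ regardless of $p^*$, so the minimum-weight choice satisfies $w(S^*)\le c_0$.

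Combining these two inequalities gives $r^* \le c_0 - c_j = \Delta_j$. Since $j\le m$ and (as noted right before the fact statement) $\Delta_{i_\ell}=\Delta_m$ is the maximum of all the $\Delta$-values, I conclude $r^*\le \Delta_j\le \Delta_m=\Delta_{i_\ell}$. There is no real obstacle here; the only subtlety to state cleanly is why $\Delta_j$ is monotone enough to be dominated by $\Delta_m$, which follows immediately because every feasible subnetwork with at most $j$ priceable edges is also a feasible subnetwork with at most $m$ priceable edges, so $c_m \le c_j$ and hence $\Delta_j\le\Delta_m$.
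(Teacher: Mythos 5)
Your proof is correct and follows essentially the same route as the paper: the paper observes tersely that each $\Delta_j$ bounds the revenue obtainable from any subnetwork with at most $j$ priceable edges and that $\Delta_{i_\ell}=\Delta_m$ dominates all $\Delta$-values, and your argument simply makes this explicit via $c_j + r^* \le w(S^*) \le c_0$. No gap; the added detail (including why $\Delta_j \le \Delta_m$) is just a fleshed-out version of the paper's one-line justification.
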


By definition of the $\theta_j$'s it is clear that at any price below
$\theta_{i_k}$ the subnetwork chosen by the follower contains no less
than $i_k$ priceable edges. Furthermore, for each $\theta_{i_k}$ the
single-price algorithm tests a candidate price that is at most a
factor $(1+\varepsilon)$ smaller than $\theta_{i_k}$. Let
$r(p_{i_k})$, $r(\theta_{i_k})$ denote the revenue that results from
assigning price $p_{i_k}$ or $\theta_{i_k}$ to all priceable edges,
respectively. 

\begin{fact}
\label{fact:prices}
For each $\theta_{i_k}$ there exists a price $p_{i_k}$ with
$(1+\varepsilon)^{-1}\theta_{i_k}\le p_{i_k}\le \theta_{i_k}$ that
is tested by the single-price algorithm. Especially, it holds that
$r(p_{i_k})\ge (1+\varepsilon)^{-1}r(\theta_{i_k})$ 
\end{fact}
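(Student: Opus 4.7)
The fact bundles two claims: the existence of a tested price $p_{i_k}$ lying in the window $[(1+\varepsilon)^{-1}\theta_{i_k},\theta_{i_k}]$, and a corresponding revenue lower bound. I would treat them separately.

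For the existence claim, my plan is to observe that the tested prices $p_j=(1+\varepsilon)^j$ for $j=0,\dots,\lceil\log c_0\rceil$ form a geometric progression with consecutive ratio $1+\varepsilon$. Hence, for any target $\theta\in[1,c_0]$, the largest tested price that does not exceed $\theta$ automatically falls in $[(1+\varepsilon)^{-1}\theta,\theta]$. It remains to argue that $\theta_{i_k}\le c_0$, which is immediate from the meaning of $c_0$: at any uniform price $p>c_0$ the all-fixed-edge subnetwork of weight $c_0$ strictly undercuts every subnetwork that uses a priceable edge, so $|S^{\star}(p)\cap E_p|=0<i_k$, contradicting the definition of $\theta_{i_k}$ as the largest price at which at least $i_k$ priceable edges are chosen. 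Boundary cases where $\theta_{i_k}<1$ are absorbed into the standard scaling assumption that fixed costs are at least one.

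For the revenue inequality, I would appeal to a parametric observation already implicit in the proof of Lemma~\ref{t:hull}: the follower's choice of priceable-edge count maximizes $\Delta_j-jp$ over $j$, and this argmax is monotonically non-increasing in $p$. In particular, at any price $p\le\theta_{i_k}$ the follower purchases a subnetwork with at least $i_k$ priceable edges. Combined with $p_{i_k}\ge(1+\varepsilon)^{-1}\theta_{i_k}$ this yields
\[ r(p_{i_k})\ge i_k\cdot p_{i_k}\ge (1+\varepsilon)^{-1}\cdot i_k\cdot \theta_{i_k}.\]
At the true threshold itself, the follower's subnetwork contains \emph{exactly} $i_k$ priceable edges (by the definition of true threshold together with the leader-friendly tie-breaking), so $r(\theta_{i_k})=i_k\cdot\theta_{i_k}$. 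The two estimates combine to $r(p_{i_k})\ge(1+\varepsilon)^{-1}r(\theta_{i_k})$, as claimed.

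The one genuine technical point is the monotonicity claim that lower prices weakly increase the priceable-edge count in the follower's optimal response. I expect this to fall to a direct two-point exchange argument comparing the values of $\Delta_j-jp$ at two prices $p<p'$, in exactly the same style as the calculation used inside the ``$\Leftarrow$'' direction of Lemma~\ref{t:hull}. Beyond that, the argument is purely bookkeeping on the geometric sequence of tested prices.
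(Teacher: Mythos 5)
Your proposal is correct and follows essentially the same route the paper takes: the paper states this Fact without a separate proof, relying exactly on the geometric grid of tested prices $p_j=(1+\varepsilon)^j$ covering $[1,c_0]$ (with $\theta_{i_k}\le c_0$) and on the monotonicity of the follower's priceable-edge count in the uniform price, both of which you simply make explicit via the exchange argument from Lemma~\ref{t:hull} and the identity $r(\theta_{i_k})=i_k\theta_{i_k}$. The only caveat you flag -- thresholds possibly falling below the smallest tested price $1$ -- is likewise glossed over by the paper itself, so it does not constitute a divergence from its argument.
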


Finally, we know that the revenue made by assigning price
$\theta_{i_k}$ to all priceable edges is $r(\theta_{i_k})=i_k\cdot
\theta_{i_k}$. Let $r$ denote the revenue of the single-price
solution returned by the algorithm. We have: 
\begin{eqnarray*}
(1+\varepsilon)\cdot H_m\cdot r & = & (1+\varepsilon)\sum_{j=1}^m
\frac{r}{j} \ge (1+\varepsilon) \sum_{k=1}^{\ell}
\sum_{j=i_{k-1}+1}^{i_k}\frac{r}{j} \ge (1+\varepsilon)\sum_{k=1}^{\ell}
\sum_{j=i_{k-1}+1}^{i_k}\frac{r(p_{i_k})}{j}\\ 
 & \ge & \sum_{k=1}^{\ell}
 \sum_{j=i_{k-1}+1}^{i_k}\frac{r(\theta_{i_k})}{j}\ge
 \sum_{k=1}^{\ell} \sum_{j=i_{k-1}+1}^{i_k}\frac{i_k\cdot
   \theta_{i_k}}{j}\\ 
 & \ge & \sum_{k=1}^{\ell} (i_k-i_{k-1})\frac{i_k\cdot
   \theta_{i_k}}{i_k}= \sum_{k=1}^{\ell} (\Delta_{i_k}
 - \Delta_{i_{k-1}}) \mbox{ , by Lemma \ref{t:thresholds}}\\ 
 & = & \Delta_{i_{\ell}} - \Delta_0=\Delta_{i_{\ell}} \ge r^*.
\end{eqnarray*}
This concludes the proof of Theorem \ref{t:single-price}.

%\begin{figure}[ht]
%\label{fig:lowerBound}
%\begin{center}
%\input{singleLowerBound.eepic}
%\caption{A tight lower bound on the approximation guarantee of the single-price algorithm.}
%\end{center}
%\end{figure}

\begin{figure}
\centering
\includegraphics[scale=0.4]{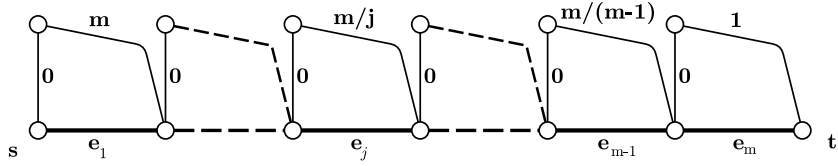}
\caption{\label{fig:lowerBound} An instance of Stackelberg Shortest
  Path, on which the analysis of the approximation guarantee of the
  single-price algorithm is tight. Bold edges are priceable, vertex
  labels of regular edges indicate cost. The instance yields tightness
  of the analysis also for Stackelberg Minimum Spanning Tree.}
\end{figure}

\subsection{Tightness}
The example in Figure~\ref{fig:lowerBound} shows that our analysis of
the single-price algorithm's approximation guarantee is tight. The
follower wants to buy a path connecting vertices $s$ and $t$. In an
optimal solution we set the price of edge $e_j$ to $m/j$. Then edges
$e_1,\ldots, e_m$ form a shortest path of cost $mH_m$. On the other
hand, assume that all edges $e_1,\ldots, e_m$ are assigned the same
price $p$. Every choice will lead to a revenue of at most $m$.
%If $p\le 1$ the leader's revenue is clearly bounded by $m$,
%if $p > m$ the shortest path does not contain any priceable edge at
%all. Let then $m/(j+1) < p \le m/j$ for some $1 \le j \le m-1$. It is
%straightforward to argue that at this price a shortest path from $s$
%to $t$ does not contain any of the priceable edges $e_{j+1},\ldots,e_m$
%and, thus, it contains at most $j$ priceable edges. It follows 
%that the leader's revenue is at most $j\cdot p \le m$. 
Similar 
%argumentation clearly holds
results apply if the follower purchases
a minimum spanning tree instead of a shortest path.

The best known lower bound for 2-player Stackelberg pricing is found
in~\cite{Cardinal07}, where APX-hardness is shown for the minimum
spanning tree case. To the authors' best knowledge, up to now no
non-constant inapproximability results have been proven. We proceed by
extending our results to multiple followers, in which case previous
results on other combinatorial pricing problems yield strong lower
bounds.

\section{Extension to Multiple Followers}
\label{multiFollowers}
In this section we extend our results on general Stackelberg network
pricing to scenarios with multiple followers. Recall that each
follower $j$ is characterized by her own collection $\mathcal{S}_j$ of
feasible subnetworks and $k$ denotes the number of
followers. Section~\ref{logk+logm} extends the analysis from the
single follower case to prove a tight bound of
$(1+\varepsilon)(H_k+H_m)$ on the approximation guarantee of the
single-price algorithm. %Section~\ref{m2}
In addition, it presents an alternative analysis that applies even in
the case of weighted followers and yields approximation guarantees
that do not depend on the number of
followers. Section~\ref{lowerBounds} derives (near) tight
inapproximability results based on known hardness results for
combinatorial pricing. Proofs are omitted due to space limitations.

%\subsection{An $(1+\varepsilon)(H_k+H_m)$-Approximation for Multiple Followers}

\subsection{Guarantees of the Single-Price Algorithm}
\label{logk+logm}
Let an instance of Stackelberg network pricing with some number $k\ge
1$ of followers be given. We 
% extend the analysis from Section~\ref{single-price} to 
obtain a similar bound on the single-price algorithm's approximation
guarantee.
\begin{theorem}
\label{t:single-price-multi-1}
The single-price algorithm computes an
$(1+\varepsilon)(H_k+H_m)$-approximation with respect to $r^*$, the
revenue of an optimal pricing, for {\sc Stack} with multiple followers.
\end{theorem}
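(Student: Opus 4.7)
I would lift the single-follower argument by applying the thresholds construction of Section \ref{single-price} separately to each follower and aggregating into one global staircase. For each follower $j$, denote its true threshold prices $\theta^j_{i^j_1} > \cdots > \theta^j_{i^j_{\ell_j}}$ and set $\rho_j := \Delta^j_m$, which by the telescoping in Lemma \ref{t:thresholds} equals $\sum_{k=1}^{\ell_j} (i^j_k - i^j_{k-1})\,\theta^j_{i^j_k}$. By Fact \ref{fact:opt} applied to follower $j$ alone, $\rho_j$ upper-bounds the revenue any price function can extract from $j$, and since a single price function extracts at most $\rho_j$ from each $j$ in parallel, $r^* \le \sum_{j=1}^k \rho_j$. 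It thus suffices to bound the single-price revenue $r$ against this sum.

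Merge all per-follower true thresholds into one decreasing sequence $q_1 > q_2 > \cdots > q_L$ with $L \le km$, and let $N_l := \sum_j n_j(q_l)$ be the total number of priceable edges chosen by all followers when the common price is $q_l$, so that $0 = N_0 < N_1 < \cdots < N_L \le km$ and $r(q_l) = q_l\, N_l$. Because each aggregated jump $N_l - N_{l-1}$ equals the width of the unique follower-threshold firing at $q_l$ (with simultaneous jumps grouped), a short telescoping yields
\[ \sum_{j=1}^k \rho_j \;=\; \sum_{l=1}^L (N_l - N_{l-1})\,q_l. \]
For each $l$ the algorithm tests a price $\tilde q_l \in [q_l/(1+\varepsilon),\,q_l]$, and monotonicity of the aggregate staircase $n(p) = \sum_j n_j(p)$ gives $(1+\varepsilon)\,r(\tilde q_l) \ge q_l\,N_l$.

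It then remains to mimic the final summation of Theorem \ref{t:single-price}, now indexed over the $N_L$ aggregated slots instead of $m$ priceable edges. Using $r \ge r(\tilde q_l)$, the per-slot inequality above, and $N_l/i \ge 1$ for $i \le N_l$, the analogous chain gives
\[ (1+\varepsilon)\,H_{N_L}\cdot r \;\ge\; \sum_{l=1}^L (N_l - N_{l-1})\,q_l \;=\; \sum_{j=1}^k \rho_j \;\ge\; r^*. \]
The harmonic inequality $H_{N_L} \le H_{km} \le H_k + H_m$ is then immediate by partitioning $\sum_{i=1}^{km} 1/i$ into $m$ blocks of $k$ consecutive terms (the first block equals $H_k$; for $s \ge 1$ the $s$-th block is bounded by $k\cdot 1/(sk+1) \le 1/s$, leaving a remainder at most $H_{m-1}$). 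The main obstacle is establishing the aggregation identity precisely: it requires careful bookkeeping of simultaneous threshold jumps and verification that every relevant $q_l$ lies within the algorithm's geometric testing range (which holds because each $q_l$ is bounded by the per-follower $c_0$); the remaining pieces are a mechanical lift of the single-follower analysis.
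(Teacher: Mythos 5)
Your proof is correct, but it follows a genuinely different route from the paper. The paper's own argument is a short reduction to the single-follower case: it takes $k$ disjoint copies $G_1,\ldots,G_k$ of $G$ and defines one aggregate follower who must buy, in copy $G_j$, a feasible subnetwork of original follower $j$. The optimal revenue of this blown-up instance upper-bounds $r^*$ (an optimal pricing for the original instance can be replicated on all copies), any single-price solution on $G_1\cup\cdots\cup G_k$ translates back to $G$ with identical revenue (all copies of an edge get the same price), and Theorem~\ref{t:single-price} with $km$ priceable edges then gives $(1+\varepsilon)H_{km}\le(1+\varepsilon)(H_k+H_m)$. You instead re-run the threshold analysis directly on the multi-follower instance: the per-follower bound $r^*\le\sum_j\Delta^j_m$, the merged staircase identity $\sum_j\Delta^j_m=\sum_l(N_l-N_{l-1})q_l$, and the harmonic-sum chain over at most $km$ aggregated slots. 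Your aggregation identity and the block argument for $H_{km}\le H_k+H_m$ check out (the monotonicity of each $n_j(p)$ in the uniform price, which your step $(1+\varepsilon)r(\tilde q_l)\ge q_lN_l$ needs, is the same monotonicity the single-follower analysis already uses, and the concern about candidate prices covering the thresholds is exactly Fact~\ref{fact:prices}, no worse than in the single-follower case). What the paper's reduction buys is brevity and black-box reuse of Theorem~\ref{t:single-price}; what your direct analysis buys is that it avoids constructing the product instance, makes the upper bound $\sum_j\Delta^j_m$ explicit, and in fact yields the marginally sharper factor $(1+\varepsilon)H_{N_L}$ where $N_L\le km$ is the total number of priceable edges bought at the lowest aggregate threshold — at the cost of the bookkeeping you acknowledge. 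Both arguments, as you note, rely essentially on the solution being a single price; neither says anything about general pricings across followers.
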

The proof of Theorem~\ref{t:single-price-multi-1} reduces the problem
to the single player case. However, it relies essentially on the fact
that we are considering the single-price algorithm. %Thus, the above
It does not imply anything about the relation of these two cases in
general.

%\subsection{A $(1+\varepsilon )m^2$-Approximation for Weighted Followers}
\label{m2}
An even more general variation of Stackelberg pricing, in which we
allow multiple {\em weighted} followers, arises naturally in the
context of network pricing games with different demands for each
player. This model has been previously considered in~\cite{Bouhtou+}.
Formally, for each follower $j$ we are given her {\em demand} $d_j \in
\R_0^+$. Given followers buying subnetworks $S_1,\ldots,S_k$, the
leader's revenue is defined as
%\[ 
$ \sum_{j=1}^k d_j\sum_{e\in S_j\cap E_p}p(e).$
%\]
It has been conjectured before that in the weighted case no
approximation guarantee essentially beyond $\mathcal{O}(k\cdot \log
m)$ is possible \cite{Roch05}. We show that an alternative analysis of
the single-price algorithm yields ratios that do not depend on the
number of followers. 

\begin{theorem}
\label{t:single-price-multi-2}
The single-price algorithm computes an $(1+\varepsilon
)m^2$-approximation with respect to $r^*$, the revenue of an optimal
pricing, for {\sc Stack} with multiple weighted followers. 
\end{theorem}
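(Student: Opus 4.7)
The plan is to exhibit a priceable edge $e^{\star}$ whose optimal price $q^{\star}=p^{\star}(e^{\star})$, when used as a uniform price (or approximated by the nearest test price in the geometric sequence), yields revenue at least $r^{\star}/m^2$ up to the $(1+\varepsilon)$ factor.

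I would start from the optimal price vector $p^{\star}$ and its optimal solutions $S_j^{\star}(p^{\star})$. Write $X_j = S_j^{\star}(p^{\star})\cap E_p$ and let $D_e=\sum_{j:e\in X_j}d_j$ be the weighted demand for priceable edge $e$. Then
\[
r^{\star} \;=\; \sum_{j=1}^{k} d_j \sum_{e\in X_j} p^{\star}(e) \;=\; \sum_{e\in E_p} p^{\star}(e)\,D_e .
\]
Applying the pigeonhole principle to this sum over $m$ terms delivers a priceable edge $e^{\star}$ with $p^{\star}(e^{\star})D_{e^{\star}}\ge r^{\star}/m$. Set $q^{\star}=p^{\star}(e^{\star})$.

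Next I would analyse $r(q^{\star})$ by splitting $J_{e^{\star}}=\{j:e^{\star}\in X_j\}$ into two classes based on the number $L_j(q^{\star})=|S_j^{\star}(q^{\star})\cap E_p|$ of priceable edges still used at the uniform price $q^{\star}$. \textbf{Case (i):} if $L_j(q^{\star})\ge 1$, follower $j$ contributes at least $d_j q^{\star}$ to $r(q^{\star})$. \textbf{Case (ii):} if $L_j(q^{\star})=0$, follower $j$ has switched to an all-fixed alternative $Y_j$; combining the inequality $c_f(Y_j)\le c_f(S_j^{\star}(p^{\star}))+q^{\star}|X_j|$ (choice at $q^{\star}$) with $c_f(S_j^{\star}(p^{\star}))+\sum_{e\in X_j}p^{\star}(e)\le c_f(Y_j)$ (choice at $p^{\star}$) gives the exchange bound $\sum_{e\in X_j}p^{\star}(e)\le q^{\star}|X_j|\le q^{\star}m$, so $j$'s optimal contribution is at most $d_j q^{\star} m$. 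A balance argument then closes the proof: either a $1/m$-fraction (in $d_j$-weight) of $J_{e^{\star}}$ lies in Case (i), in which case $r(q^{\star})\ge q^{\star}D_{e^{\star}}/m\ge r^{\star}/m^2$ directly; or else most demand lies in Case (ii), so by the exchange bound the OPT revenue concentrated there is commensurate with $q^{\star}m$ times the dropped-out demand and can be recovered at a strictly smaller uniform price $q<q^{\star}$ in the geometric test set, where enough of those followers re-enter with priceable edges.

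The geometric test sequence costs only a factor $(1+\varepsilon)$: for the nearest tested price $\tilde q\le q$ we have $\tilde q\ge q/(1+\varepsilon)$, and since $L_j(\cdot)$ is non-increasing in the uniform price, $r(\tilde q)\ge \tilde q\sum_j d_j L_j(q)=(\tilde q/q)\,r(q)\ge r(q)/(1+\varepsilon)$. Combining the two losses gives the claimed $(1+\varepsilon)m^2$-approximation.

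The main obstacle is the balance argument in Case (ii). When a large weighted fraction of $J_{e^{\star}}$ drops out at $q^{\star}$, the revenue at $q^{\star}$ itself is small, and one must argue that a lower test price recovers the missing revenue without a loss beyond $m^2$. The exchange inequality $\sum_{e\in X_j}p^{\star}(e)\le q^{\star}|X_j|$ is the critical tool: it certifies that each dropped-out follower's optimal contribution is bounded by $d_j q^{\star} m$, so the OPT revenue that can be ``hidden'' at prices below $q^{\star}$ is at most $m$ times what could already have been collected at $q^{\star}$, yielding the two compounded factors of $m$ in the final bound.
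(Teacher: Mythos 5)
There is a genuine gap in the Case~(ii) branch, and it is exactly where the paper's one key idea lives. Your pigeonhole step and Case~(i) are fine, but when a large weighted fraction of $J_{e^{\star}}$ drops out at the uniform price $q^{\star}=p^{\star}(e^{\star})$, your ``balance argument'' is only asserted: the exchange bound $\sum_{e\in X_j}p^{\star}(e)\le q^{\star}|X_j|$ is an \emph{upper} bound on OPT's revenue from a dropped follower, and it neither identifies a single tested price at which those followers re-enter nor gives a \emph{lower} bound on the revenue collected there. Different dropped followers could a priori re-enter at different prices, so ``can be recovered at a strictly smaller uniform price $q<q^{\star}$'' is precisely the claim that needs proof, and your write-up does not supply it.

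The missing idea is to scale the chosen optimal price down by a factor $m$, i.e.\ to use the uniform price $p_e=p^{\star}(e)/m$. This is how the paper argues (and it removes the case distinction entirely): for any follower $j\in F(e)$ buying $S_j$ under $p^{\star}$, the weight of $S_j$ at the uniform price $p_e$ is at most $c(S_j)+m\cdot p^{\star}(e)/m=c(S_j)+p^{\star}(e)\le w^{\star}(S_j)\le c_j^0$, where $c_j^0$ is the cost of $j$'s cheapest all-fixed-price subnetwork. Hence \emph{every} follower in $F(e)$ still buys at least one priceable edge at price $p_e$ and contributes at least $d_j p^{\star}(e)/m$, so $r(p_e)\ge r^{\star}(e)/m$; summing $r^{\star}(e)$ over the $m$ edges gives $m^2\max_e r(p_e)\ge r^{\star}$, and the geometric grid loses only the extra $(1+\varepsilon)$. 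Your Case~(ii) can in fact be repaired by the same inequality (since $e^{\star}\in X_j$ one gets $c_j^0-c_f(S_j)\ge q^{\star}$, so every follower in $J_{e^{\star}}$ has drop-out threshold at least $q^{\star}/|X_j|\ge q^{\star}/m$ and they all re-enter simultaneously at a tested price near $q^{\star}/m$), but once you have that observation the cleaner route is to use the price $q^{\star}/m$ from the start, which is the paper's proof; your final $(1+\varepsilon)$ monotonicity step is correct and matches the paper's.
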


\subsection{Lower Bounds}
\label{lowerBounds}
Hardness of approximation of Stackelberg pricing with multiple
followers follows immediately from known results about other
combinatorial pricing
models. % which have received considerable attention lately.
Theorem~\ref{t:hardness1} is based on a reduction from the (weighted)
unit-demand envy-free pricing problem with uniform budgets, which is
known to be inapproximable within $\mathcal{O} (m^{\varepsilon})$ ($m$
denotes the number of products)~\cite{B06}. Here we are given a
universe of products and a collection of (weighted) customers, each of
which buys the cheapest product out of some set of alternatives with a
price not exceeding her budget. The resulting Stackelberg game is an
instance of the so-called {\em river tariffication problem}. Each
player needs to route her demand along one out of a number of parallel
links connecting her respective source and sink pair. One direct fixed
price connection determines her maximum budget for purchasing a
priceable link. Theorem~\ref{t:hardness1} resolves an open problem
from~\cite{Bouhtou+}. The construction is depicted in
Figure~\ref{fig:hardness1}.

\begin{theorem}
\label{t:hardness1}
The Stackelberg network pricing problem with multiple weighted
followers is hard to approximate within $\mathcal{O}(m^{\varepsilon})$
for some $\varepsilon >0$, unless NP $\subseteq \bigcap_{\delta >0}$
BPTIME($2^{n^{\delta}}$). The same holds for the river tariffication
problem. 
\end{theorem}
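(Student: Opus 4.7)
The plan is an approximation-preserving reduction from the (weighted) unit-demand envy-free pricing problem with uniform budgets (UDP-MAX), which Briest~\cite{B06} showed cannot be approximated within $\mathcal{O}(m^{\varepsilon})$ for some $\varepsilon>0$ under the complexity assumption stated in the theorem, where $m$ counts the products. Given such an instance with product set $\{1,\ldots,m\}$, uniform budget $B$, and weighted customers $1,\ldots,k$ with acceptable sets $A_i$ and weights $w_i$, I build a network as follows: introduce one priceable edge $e_j$ per product $j$; for each customer $i$ create a fresh source-sink pair $(s_i,t_i)$, connect them by a single fixed-price ``river'' edge of cost $B$, and for every $j\in A_i$ add a parallel $s_i$-$t_i$ path obtained by attaching two zero-cost fixed edges to the endpoints of the shared priceable edge $e_j$. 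Follower $i$'s feasible subnetworks are the $s_i$-$t_i$ paths, her demand is $d_i=w_i$, and her cheapest subnetwork is trivially computable by inspection of her parallel links.

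Correctness is a one-line simulation. Under any pricing $p$, follower $i$ picks the cheapest of her parallel options: she uses the river link (contributing zero revenue) precisely when $\min_{j\in A_i}p(e_j)>B$, and otherwise routes through $j^{\star}=\argmin_{j\in A_i}p(e_j)$, contributing $w_i\cdot p(e_{j^{\star}})$ to the leader's revenue. This equals exactly the revenue customer $i$ generates in UDP-MAX, where she buys the cheapest affordable alternative. Optimal revenues therefore coincide, and any $\alpha$-approximation for the Stackelberg game yields an $\alpha$-approximation for UDP-MAX. The construction uses exactly $m$ priceable edges, so the $\mathcal{O}(m^{\varepsilon})$ bound transfers verbatim; moreover, the network is by design a river tariffication instance (disjoint source-sink pairs joined by parallel priceable links plus one fixed ``river'' link), so the same bound settles the open question of~\cite{Bouhtou+}.

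Essentially no step of this argument is technically hard; it is careful bookkeeping built on top of Briest's lower bound. The only place to watch is the tie-breaking convention: our model assumes the follower picks a revenue-maximizing subnetwork among equally weighted ones. This cannot hurt us here, because within follower $i$'s subgraph all priceable parallel links contribute the same revenue per unit price, so ties between minimum-priced links are broken identically under either convention. The main ``obstacle,'' such as it is, lies in ensuring that the reduction preserves the parameter $m$ and places the output in the restricted river tariffication class, both of which are immediate from the construction.
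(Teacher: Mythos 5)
Your reduction is the one the paper itself sketches (the detailed proof is omitted there, but the construction is exactly Figure~\ref{fig:hardness1}): one priceable edge per product, one follower per customer with demand equal to her weight, parallel routes through the shared priceable edges, and a fixed direct link encoding her budget. Two details in your write-up need repair, though. First, ``uniform budgets'' in~\cite{B06} means each customer has a single budget valid for \emph{all products in her set}, but budgets differ across customers; with one common budget $B$ for everybody, as you state it, the source problem becomes trivial (price every product at $B$ and extract $B$ from every customer), so follower $i$'s river edge must have cost $B_i$, her own budget. Second, once budgets differ you must make the network directed (the paper explicitly reduces to directed \stackSP): with undirected zero-cost connector edges attached to a priceable edge $e_j=(u_j,v_j)$ shared by customers $i$ and $i'$, follower $i$ has the fixed-cost $s_i$--$t_i$ path $s_i,u_j,s_{i'},t_{i'},v_j,t_i$ of weight $B_{i'}$, so she can cross on another customer's budget edge, pay no toll, and the optimal revenues of the two instances no longer coincide. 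Directing all edges from the source side to the sink side removes this; alternatively one can define $\mathcal{S}_i$ to consist only of the intended $|A_i|+1$ parallel options, which still proves the hardness claim for general {\sc Stack} but not the ``river tariffication'' part, since that problem is a shortest-path game on the underlying (directed) network. With these corrections your simulation argument and parameter bookkeeping are exactly the paper's.
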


\begin{figure}
\centering
\subfigure[]{
\label{fig:river} 
\label{fig:hardness1}
\centering
\includegraphics[scale=0.3]{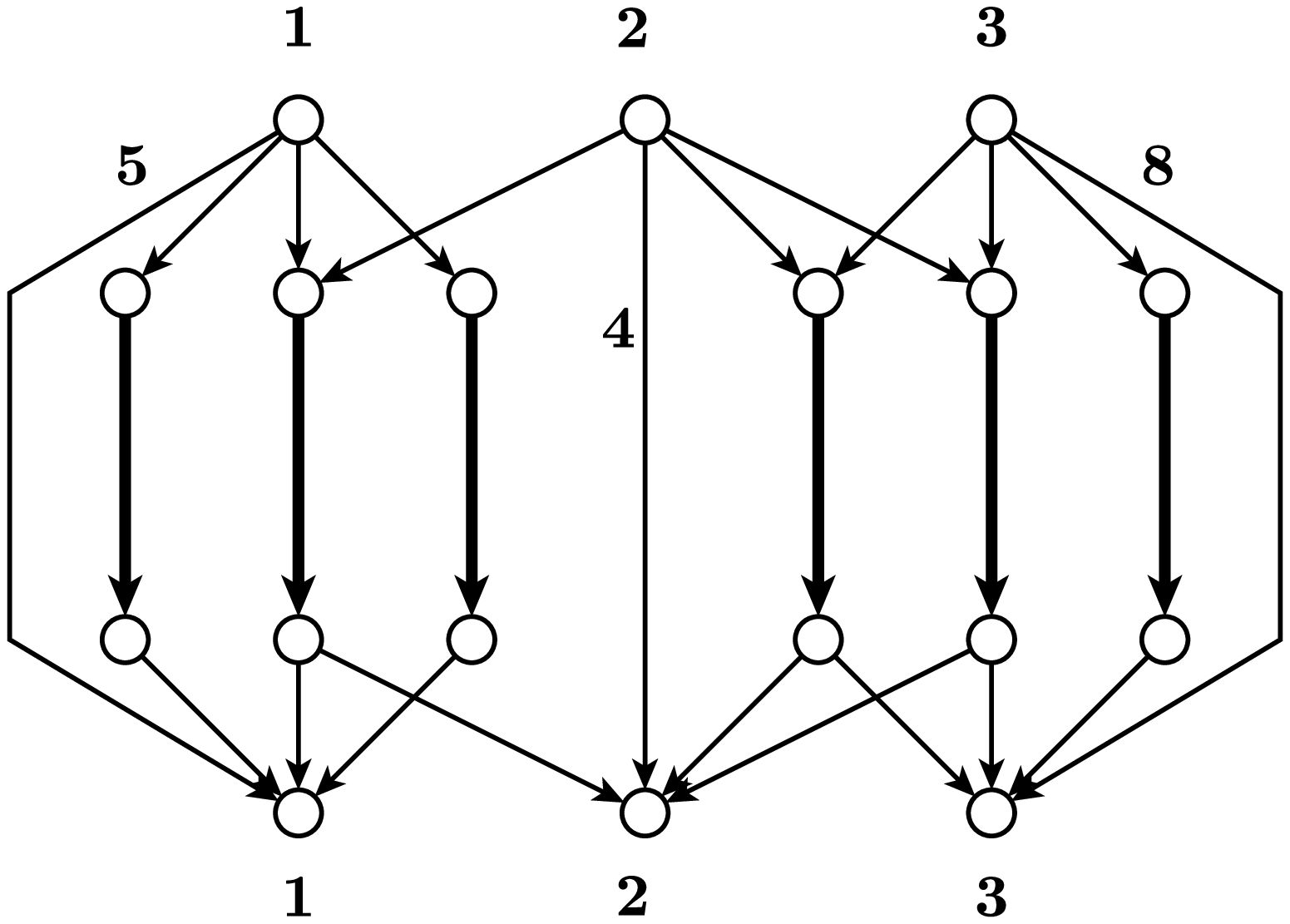}}
\hspace{2cm}
\subfigure[]{
\label{fig:VCsingle} 
\label{fig:hardness2}
\centering
\includegraphics[scale=0.3]{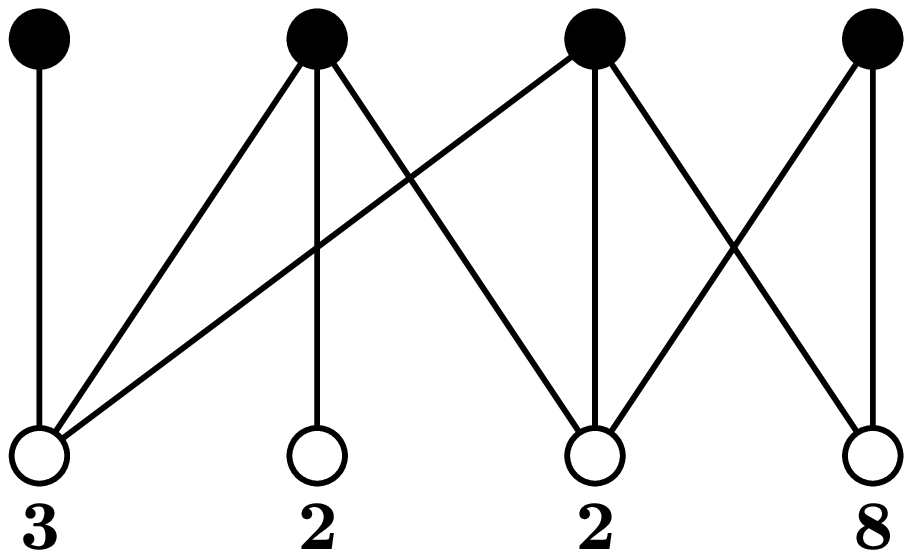}}

\caption{Reductions from pricing problems to Stackelberg pricing. (a)
  Unit-demand reduces to directed \stackSP. Bold edges are priceable,
  edge labels indicate cost. Regular edges without labels have cost
  0. Vertex labels indicate source-sink pairs for the followers. (b)
  Single-minded pricing reduces to bipartite \stackVC. Filled vertices
  are priceable, vertex labels indicate cost. For each customer there
  is one follower, who strives to cover all incident edges.}
\end{figure}

Theorem~\ref{t:hardness2} is based on a reduction from the
single-minded combinatorial pricing problem, in which each customer is
interested in a subset of products and purchases the whole set if the
sum of prices does not exceed her budget. Single-minded pricing is
hard to approximate within $\mathcal{O}(\log ^{\varepsilon}k+\log
^{\varepsilon}m)$~\cite{Demaine06+}, where $k$ and $m$ denote the
numbers of customers and products,
respectively. Theorem~\ref{t:hardness2} shows that the single-price
algorithm is essentially best possible for multiple unweighted
followers.

\begin{theorem}
\label{t:hardness2}
The Stackelberg network pricing problem with multiple unweighted
followers is hard to approximate within $\mathcal{O}(\log
^{\varepsilon}k+\log ^{\varepsilon}m)$ for some $\varepsilon >0$,
unless NP $\subseteq \bigcap _{\delta >0}$
BPTIME($2^{n^{\delta}}$). The same holds for bipartite Stackelberg
Vertex Cover Pricing ({\sc StackVC}).
\end{theorem}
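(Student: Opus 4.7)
The plan is to build an approximation-preserving reduction from single-minded combinatorial pricing directly to bipartite \stackVC{}, matching the construction sketched in Figure~\ref{fig:hardness2}, and then invoke the hardness of~\cite{Demaine06+}. Given an instance of single-minded pricing with a set of $m$ products and $k$ customers, where customer $j$ has bundle $B_j$ and budget $b_j$, I construct a bipartite graph $G=(P \cup U,E)$ with a priceable vertex $v_i \in P$ for each product $i$ and a fixed-price vertex $u_j \in U$ for each customer, with cost $c(u_j)=b_j$. For every $i \in B_j$ I add an edge $(u_j,v_i)$. Each follower $j$ is handed the edge set $E_j=\{(u_j,v_i):i\in B_j\}$ and seeks a minimum-weight vertex cover of $E_j$.

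The core step is to show that this reduction is exactly revenue-preserving. For any pricing $p$ of the priceable vertices, the only relevant covers for follower $j$ are $\{u_j\}$ of cost $b_j$, and $\{v_i : i\in B_j\}$ of cost $\sum_{i\in B_j} p(v_i)$: any mixed cover that omits some $v_i$ must still contain $u_j$, so its cost is at least $b_j$ and never strictly better than the two pure alternatives. By the tie-breaking convention (follower picks the higher-revenue minimum-weight cover), follower $j$ selects the priceable side iff $\sum_{i\in B_j}p(v_i)\le b_j$, contributing precisely $\sum_{i\in B_j}p(v_i)$ to the leader's revenue. Summing over $j$, the leader's revenue in the \stackVC{} instance equals the pricing revenue in the single-minded instance for every price vector, so optimal revenues coincide and any $\alpha$-approximation for \stackVC{} yields an $\alpha$-approximation for single-minded pricing.

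Since the reduction preserves the parameters ($k$ followers $\leftrightarrow$ $k$ customers, $m$ priceable vertices $\leftrightarrow$ $m$ products) and runs in polynomial time, plugging in the $\mathcal{O}(\log^{\varepsilon}k+\log^{\varepsilon}m)$ inapproximability of single-minded pricing under the assumption NP $\not\subseteq \bigcap_{\delta>0}$ BPTIME$(2^{n^\delta})$ yields the stated hardness for bipartite \stackVC. To transfer this to general (edge-based) Stackelberg network pricing, I subdivide each vertex $x$ of $G$ into an edge $e_x=(x',x'')$, marking $e_x$ priceable (with the same price variable) if $x$ was priceable and fixed-cost (with $c(e_x)=c(x)$) otherwise; a follower's feasible subnetworks are the edge sets that, after projecting back to vertices, cover $E_j$. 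This conversion is purely syntactic, preserves $k$ and $m$ up to constants, and turns the vertex game into an edge game with identical revenue structure.

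The only genuine delicacy is the case analysis establishing that no mixed cover can beat the two canonical ones and that the tie-breaking rule picks the revenue-bearing cover at the budget boundary; once this is nailed down, the rest is bookkeeping on the parameters $k$ and $m$ to make sure the inherited inapproximability exponent $\varepsilon$ transfers without loss. Everything else is a straightforward appeal to the known hardness of single-minded pricing.
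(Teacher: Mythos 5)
Your reduction is exactly the construction the paper sketches (Figure~\ref{fig:hardness2}): priceable product-vertices on one side, fixed-cost customer-vertices encoding budgets on the other, one follower per customer covering her star of edges, with the min-cover case analysis showing revenue equivalence and then importing the single-minded pricing hardness of~\cite{Demaine06+}. So the proposal is correct and follows essentially the same route as the paper; the extra vertex-to-edge subdivision step is harmless but not even needed, since the paper's model already treats vertex games as instances of Stackelberg network pricing.
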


The idea for the proof of Theorem~\ref{t:hardness2} is illustrated in
Figure~\ref{fig:hardness2}. We define an instance of \stackVC\ in
bipartite graphs. Vertices on one side of the bipartition are
priceable and represent the universe of products, vertices on the
other side encode customers and have fixed prices corresponding to the
respective budgets. For each customer we define a follower in the
Stackelberg game with edges connecting the customer vertex and all
product vertices the customer wishes to purchase. Now every follower
seeks to buy a min-cost vertex cover for her set of edges. We proceed
by taking a closer look at this special type of Stackelberg pricing
game and especially focus on the interesting case of a single
follower.

%------------------------------------------------------------------------------------

\vskip-0.3cm
\section{Stackelberg Vertex Cover}
\label{StackVC}

Stackelberg Vertex Cover Pricing is a vertex game, however, the
approximation results for the single-price algorithm continue to
hold. Note that in general the vertex cover problem is hard, hence we
focus on settings, in which the problem can be solved in polynomial
time. In bipartite graphs the problem can be solved optimally by using
a classic and fundamental max-flow/min-cut argumentation. If all
priceable vertices are in one side of the partition, then for multiple
followers there is evidence that the single-price algorithm is
essentially best possible. Our main theorem in this section states 
that the setting with a single follower can be solved exactly. As a
consequence, general bipartite \stackVC\ can be approximated by a
factor of 2.

\begin{theorem}
\label{theo:bip:one_side}
If for a bipartite graph $G=(A \cup B, E)$ we have $V_p \subseteq A$,
then there is a polynomial time algorithm computing an optimal price
function $p^*$ for \stackVC. 
\end{theorem}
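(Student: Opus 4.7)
The plan is to establish a combinatorial upper bound on the leader's revenue and then construct, via max-flow and LP duality, a price function $p^*$ that matches it exactly.

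First, I would isolate two easily-computable quantities. Let $c_0$ denote the cost of a minimum vertex cover of $G$ that avoids every priceable vertex; it is obtained by setting the cost of each $v\in V_p$ to $+\infty$ and running a single bipartite min-vertex-cover computation via max-flow. Let $c_f^*$ denote the minimum fixed portion $c(C\cap V_f)$ taken over all vertex covers $C$ of $G$; it is obtained analogously by pricing each $v\in V_p$ at $0$. Both are polynomial by K\"onig's theorem. For any price function $p$ with follower-optimal cover $S^*(p)$, the all-fixed cover witnesses $p(S^*(p)\cap V_p)+c(S^*(p)\cap V_f)\le c_0$, while $c(S^*(p)\cap V_f)\ge c_f^*$ by definition. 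Combining the two inequalities yields the upper bound $r^*\le c_0-c_f^*$.

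To attain the bound I would seek $p^*$ satisfying (i) $p^*(C\cap V_p)+c(C\cap V_f)\ge c_0$ for every vertex cover $C$ of $G$, with (ii) equality for a specific cover $S^\diamond$ whose fixed portion equals $c_f^*$. Under (i) the follower's optimum cost is exactly $c_0$; under (ii), together with the tie-breaking convention, she selects a cover of revenue $c_0-c_f^*$. A natural candidate $S^\diamond$ is a min-vertex-cover of $G$ with priceable vertices priced at zero; without loss of generality $V_p\subseteq S^\diamond$, so the excess $c_0-c_f^*$ must be distributed across $V_p$ in such a way that no alternative cover becomes cheaper than $c_0$.

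The main obstacle is realising (i) and (ii) simultaneously against exponentially many candidate covers. Every deviating cover corresponds to swapping a subset $T\subseteq V_p$ for a set $B_T\subseteq B$ that covers the edges incident to $T$, and the minimum such $c(B_T)$ is itself a bipartite min-vertex-cover value. Constraint (i) thus reduces to a submodular cut inequality $p^*(T)\le c(B_T)$ for every $T\subseteq V_p$, subject to $p^*(V_p)=c_0-c_f^*$. I would encode these inequalities as capacities in an auxiliary $s$-$t$ network on $V_p\cup B$ and compute a max-flow whose flow values on the $s$-to-$V_p$ arcs define $p^*$; iterated max-flow refinements redistribute prices along residual paths whenever the first pass leaves slack on some $T$. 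Correctness follows from the integrality and duality of bipartite max-flow: the resulting flow saturates a cut of value $c_0-c_f^*$, whose K\"onig-dual cover certifies that no alternative $C$ undercuts $c_0$, while the total flow equals $c_0-c_f^*$, yielding $r^*=c_0-c_f^*$ and the claimed polynomial-time algorithm.
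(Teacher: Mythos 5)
Your upper bound $r^*\le c_0-c_f^*$ is exactly the paper's bound $r^*\le\Delta_n$ (your $c_f^*$ is the paper's $c_n$), and the overall spirit --- attain this bound via max-flow/LP-duality --- is also the paper's. The gap is in the reduction of the stability constraints and in the auxiliary network you build from it. You claim that condition (i) reduces to $p^*(T)\le c(B_T)$ for every $T\subseteq V_p$, where $c(B_T)$ is the cheapest set of $B$-vertices covering the edges incident to $T$, and you then encode only $V_p\cup B$ in the flow network. This ignores the fixed-cost vertices in $A$ and the fact that the $B$-vertices covering $T$'s edges may be needed anyway to cover edges of fixed $A$-vertices; the correct constraint is $p^*(T)\le g(T)-c_f^*$, where $g(T)$ is the minimum \emph{fixed} cost of a cover excluding $T$, and this can be strictly smaller than $c(B_T)$. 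Concretely, take $A=\{v_1,v_2,a\}$ with $v_1,v_2$ priceable and $c(a)=10$, $B=\{b_1,b_2\}$ with $c(b_1)=3$, $c(b_2)=5$, and edges $v_1b_1$, $ab_1$, $v_2b_2$. Here $c_0=8$, $c_f^*=3$, so $r^*\le 5$. Your system ($p(v_1)+p(v_2)=5$, $p(v_1)\le c(B_{\{v_1\}})=3$, $p(v_2)\le 5$, $p(v_1)+p(v_2)\le 8$) admits $p(v_1)=3$, $p(v_2)=2$, but then the cover $\{v_2,b_1\}$ costs $5<8$, the follower drops $v_1$, and the revenue is only $2$. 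The true constraint is $p(v_1)\le 0$, because $b_1$ is needed to cover $a$'s edge whether or not $v_1$ is in the cover; a flow network containing only $V_p\cup B$ cannot see this, so your construction does not certify (i), and the claimed duality argument breaks at exactly this point.

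The fix is essentially what the paper does: build the flow network on the \emph{entire} graph (source to all of $A$ with arc capacities equal to prices or fixed costs, all of $B$ to the sink with capacities equal to fixed costs, original edges with infinite capacity), first compute a maximum flow with all prices set to $0$ --- this pre-saturates the capacity consumed by the fixed structure and has value $c_n=c_f^*$ --- and only then raise prices of priceable vertices along augmenting $s$-$t$-paths that start at a priceable vertex. The paper's two lemmas (every augmenting path used is a ``price path'', and no augmenting path from $s$ to a priceable vertex is ever created) guarantee that all priceable vertices remain in the min-cost cover throughout, so the final revenue is the total flow pushed through priceable arcs, namely $c_0-c_n$, matching the upper bound. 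In the example above this machinery automatically forces $p(v_1)=0$ and $p(v_2)=5$. So your plan can be repaired, but only by replacing both the constraint family and the auxiliary network with ones that retain the fixed $A$-vertices and by arguing about which augmenting paths may be used, which is the actual content of the paper's proof.
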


Before we prove the theorem, we mention that the standard problem of
minimum vertex cover in a bipartite graph $G$ with disjoint vertex
sets $A$, $B$ and edges $E \subseteq A \times B$ can be solved by the
following application of LP-duality. The LP-dual is interpreted as a
maximum flow problem on an adjusted flow network $G_d$. In particular,
$G_d$ is constructed by adding a source $s$ and a sink $t$ to $G$ and
connecting $s$ to all vertices $v \in A$ with directed edges $(s,v)$,
and $t$ to all vertices $v \in B$ with directed edges $(v,t)$. Each
such edge gets as capacity the cost of the involved original vertex -
i.e. $p(v)$ for $v \in V_p$ or $c(v)$ if $v \in V_f$. Furthermore, all
original edges of the graph are directed from $A$ to $B$ and their
capacity is set to infinity. The value of a maximum $s$-$t$-flow
equals the cost of a minimum cut, and in addition the cost of a
minimum cost vertex cover of the graph $G$ (for an example see
Figure~\ref{fig:VCopt}). To obtain such a cover consider an
\emph{augmenting $s$-$t$-path} in $G_d$, which is a path traversing
only forward edges with slack capacity and backward edges with
non-zero flow. The maximum flow can be computed by iteratively
increasing flow along such paths. The vertices in the minimum vertex
cover then correspond to incident edges in a minimum cut. In
particular, the minimum vertex cover includes a vertex $v \in A$ if
the flow allows no augmenting $s$-$v$-path from $s$ to $v$, i.e. if
every path from $s$ to $v$ has at least one backward edge with no
flow, or at least one forward edge without slack capacity.

We use a similar idea to obtain the optimal pricing for \stackVC. Let
$n=|V_p|$ and the values $c_j$ for $1 \le j\le n$ denote the minimum
sum of prices of fixed-price vertices in any feasible subnetwork
containing at most $j$ priceable vertices. Then, $\Delta_j = c_0 -
c_j$ are again upper bounds on the revenue that can be extracted from
a network that includes at most $j$ priceable vertices. We thus have
$r^* \le \Delta_n$.

\begin{figure}
\centering
\subfigure[]{
\centering
\includegraphics[scale=0.3]{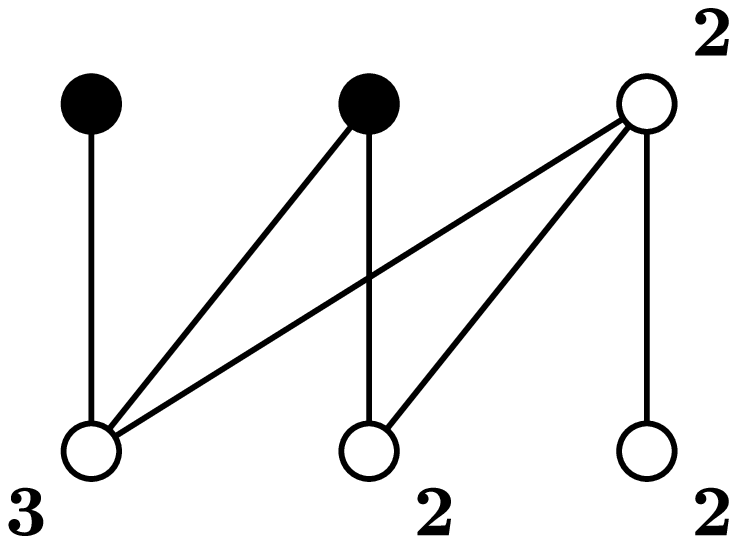}}
\hspace{1cm}
\subfigure[]{
\centering
\includegraphics[scale=0.3]{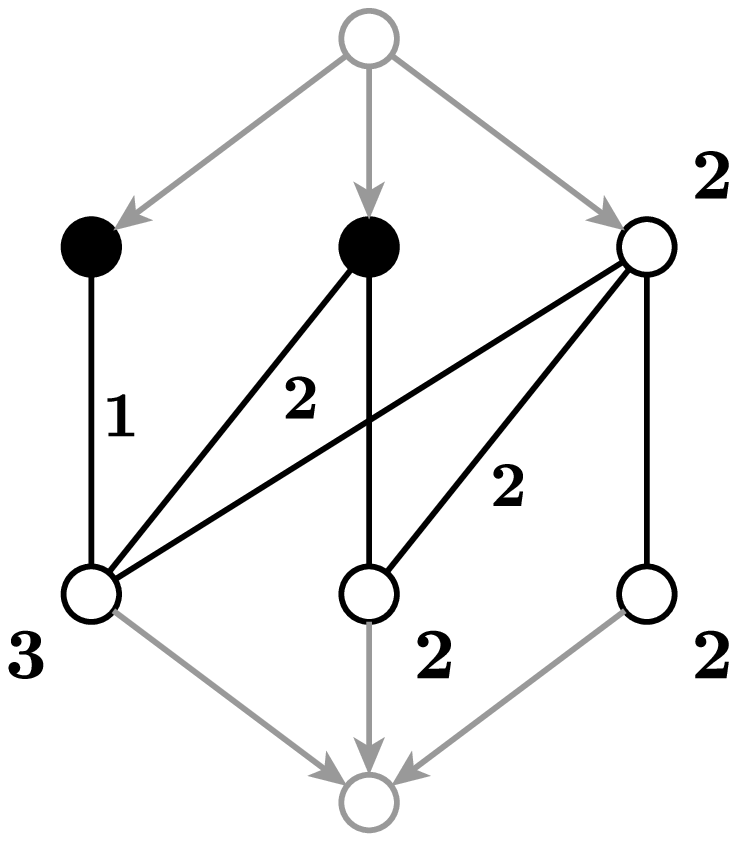}}
\hspace{1cm}
\subfigure[]{
\centering
\includegraphics[scale=0.3]{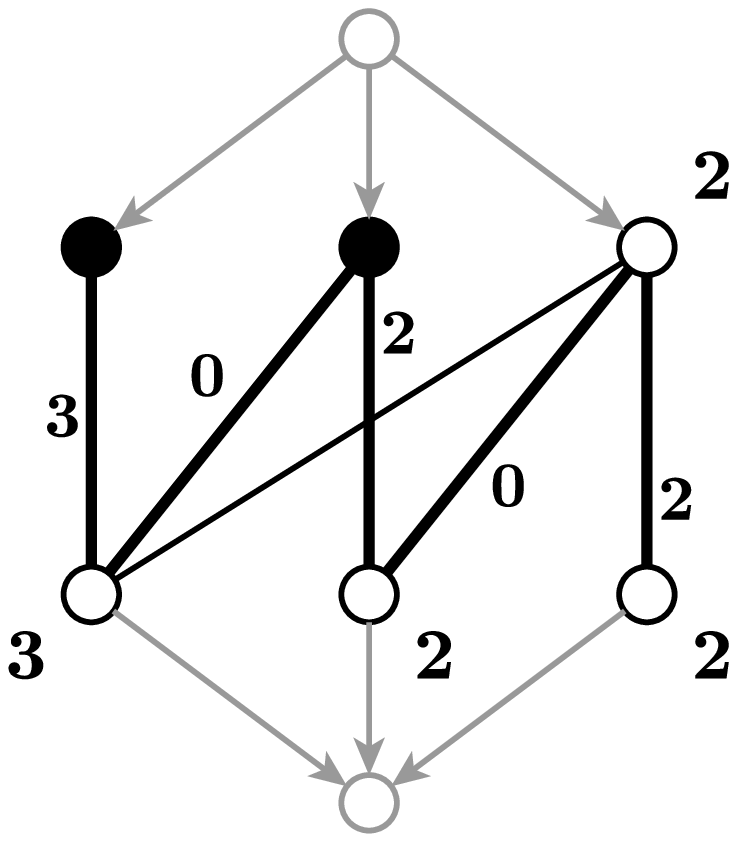}}
\caption{\label{fig:VCopt} Construction to solve bipartite \stackVC\
  with priceable vertices in one partition and a single
  follower. Filled vertices are priceable, vertex labels indicate
  cost. (a) A graph $G$; (b) The flow network $G_d$ obtained from
  $G$. Grey parts are source and sink added by the
  transformation. Edge labels indicate a suboptimal $s$-$t$-flow; (c)
  An augmenting path $P$ indicated by bold edges and the resulting
  flow. Every such path $P$ starts with a priceable vertex, and all
  priceable vertices remain in the optimum cover at all times.}
\end{figure}

\begin{algorithm}
\caption{\label{algo:bip:onepartition} Solving \stackVC\ in bipartite
  graphs with $V_p \subseteq A$}
\dontprintsemicolon
Construct the flow network $G_d$ by adding nodes $s$ and $t$\;
Set $p(v) = 0$ for all $v \in V_p$ \;
Compute a maximum $s$-$t$-flow $\phi$ in $G_d$ \;
\While{there is $v \in V_p$ s.t. increasing $p(v)$ yields an
  augmenting $s$-$t$-path $P$}{
Increase $p(v)$ and $\phi$ along $P$ as much as
 possible\;\label{step:aug}
}
\end{algorithm}

Suppose all priceable vertices are located in one partition $V_p
\subseteq A$ and consider Algorithm~\ref{algo:bip:onepartition}.  We
denote by $\cov_{ALG}$ the cover calculated by
Algorithm~\ref{algo:bip:onepartition}. At first, when computing the
maximum flow on $G_d$ holding all $p(v) = 0$, the algorithm obtains a
flow of $c_n$. We first note that in the following while-loop we will
never face a situation, in which there is an augmenting $s$-$t$-path
(traversing forward edges with slack capacity and backward edges with
non-zero flow) starting with a fixed-price vertex. We call such a path
a \emph{fixed} path, while an augmenting $s$-$t$-path starting with a
priceable vertex is called a \emph{price} path.
\begin{lemma}
\label{t:pricePath}
Every augmenting path considered in the while-loop of
Algorithm~\ref{algo:bip:onepartition} is a price path.
\end{lemma}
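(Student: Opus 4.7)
The plan is to prove the lemma via a residual-graph argument: at the moment we raise $p(v)$ for some priceable vertex $v$, the only capacity in $G_d$ that changes is that of the edge $(s,v)$, so any newly available augmenting path must use this edge, and since $s$ has no incoming edges in $G_d$, it must use it as its very first edge.

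First I would establish a loop invariant: at the top of every iteration of the while-loop, the current flow $\phi$ is a maximum $s$-$t$-flow in $G_d$ with respect to the current prices. This holds initially because line~3 of Algorithm~\ref{algo:bip:onepartition} computes a maximum flow, and it is preserved because step~\ref{step:aug} couples the raise of $p(v)$ with an augmentation along $P$ of matching size, so after the step the newly opened slack on $(s,v)$ has been exactly consumed by the augmentation and no augmenting $s$-$t$-path remains. Without this invariant, augmenting paths unrelated to the current price raise could pre-exist and the lemma would fail.

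With the invariant in place I would then analyse a single iteration. Raising $p(v)$ changes exactly one edge capacity in $G_d$: it increases the forward capacity of $(s,v)$; every other forward capacity and every backward residual capacity is unchanged. If the augmenting path $P$ identified by the algorithm did not traverse the forward edge $(s,v)$, then every residual edge on $P$ was already present (with the same or larger residual capacity) before the raise, contradicting the invariant that $\phi$ was already maximum for the pre-raise capacities. Hence $P$ must use $(s,v)$ forward, and because $s$ has no incoming edges in $G_d$, the only possible position for $(s,v)$ on an $s$-$t$-path is as its first edge. Therefore $P$ starts at the priceable vertex $v$, i.e.\ $P$ is a price path.

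I expect the main obstacle to be formulating and maintaining the maximum-flow invariant cleanly, given that step~\ref{step:aug} performs the price raise and the augmentation simultaneously; in particular one has to argue that "as much as possible" in that step leaves no residual augmenting path at the end of the iteration. Once the invariant is set up, the structural fact that $s$ has only outgoing edges in $G_d$ turns the residual-graph argument into essentially one line.
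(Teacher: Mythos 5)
Your approach is sound but genuinely different from the paper's. The paper also argues by induction over the while-loop, but it maintains only the weaker invariant that \emph{no fixed path exists} at the start of each iteration, and preserves it by a purely combinatorial splicing argument: if augmenting along the used price path $P_p$ created a fixed path $P_f$, then following $P_f$ from $s$ to the first vertex $w$ where it meets $P_p$, and then $P_p$ from $w$ to $t$, exhibits a fixed path that must already have existed before the augmentation --- contradiction (the same surgery reappears in the proof of Lemma~\ref{t:noVpath}). You instead maintain the stronger invariant that $\phi$ is a \emph{maximum} flow for the current prices, after which the conclusion is immediate from the facts that raising $p(v)$ changes only the capacity of $(s,v)$ and that $s$ has no incoming edges; that part of your argument is correct. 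The one place where your write-up is not yet a proof is exactly the step you flag: ``the slack on $(s,v)$ is exactly consumed'' does not by itself rule out an augmenting path that avoids $(s,v)$ and uses the reverse residual edges created along $P$. This is, however, closable by a standard cut argument: if step~\ref{step:aug} pushes $\delta$ units and raises $p(v)$ by exactly $\delta$ (so that $(s,v)$ stays saturated, as the paper stipulates), then the value of $\phi$ grows by $\delta$ while the capacity of any cut --- in particular an old minimum cut --- grows by at most $\delta$, so the new flow again meets a cut of equal capacity and is maximum. With that three-line lemma inserted, your proof is complete; its benefit over the paper's is a cleaner, more structural invariant (and it localizes all the work in one reusable max-flow fact), while the paper's weaker invariant avoids any appeal to max-flow/min-cut at the cost of the path-splicing case analysis.
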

\begin{proof}
We prove the lemma by induction on the while-loop and by
contradiction. Suppose that in the beginning of the current iteration 
there is no fixed path. In particular, this is true for the first
iteration of the while-loop. Then, suppose that after we have
increased the flow over a price path $P_p$, a fixed path $P_f$ is
created. $P_f$ must include some of the edges of $P_p$. Consider the 
vertex $w$ at which $P_f$ hits $P_p$. By following $P_f$ from $s$ to
$w$ and $P_p$ from $w$ to $t$ there is a fixed path, which must have
been present before flow was increased on $P_p$. This is a
contradiction and proves the lemma. 
\end{proof}
Recall from above that the optimum cover contains a vertex $v \in A$
if there is no augmenting $s$-$v$-path from $s$ to $v$. In particular,
this means that for a vertex $v \in A \cap \cov$ the following two
properties are fulfilled: (1) there is no slack capacity on edge
$(s,v)$; (2) there is no augmenting $s$-$v$-path from $s$ over a
different vertex $v' \in A$. As the algorithm always adjusts the price
of a vertex $v$ to equal the current flow on $(s,v)$, 
% we can assume that there is never any slack capacity on edges
% $(s,v)$ for any $v \in V_p$. Thus, 
only the violation of property~(2) can force a vertex $v
\in V_p$ to leave the cover. In particular, such an augmenting
$s$-$v$-path must start with a fixed-price vertex, and it must reach
$v$ by decreasing flow over one of the original edges $(v,w)$ for $w
\in B$. We call such a path a \emph{fixed $v$-path}.

\begin{lemma}
\label{t:noVpath}
Algorithm~\ref{algo:bip:onepartition} creates no fixed $v$-path for
any priceable vertex $v \in V_p$.
\end{lemma}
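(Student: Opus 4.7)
The plan is to mirror the inductive structure of the proof of Lemma~\ref{t:pricePath}: show by induction on the iterations of the while-loop that at every state of Algorithm~\ref{algo:bip:onepartition} no fixed $v$-path exists for any priceable vertex $v \in V_p$. The base case is the state right after the initial max-flow computation. Since $p(v) = 0$ for every $v \in V_p$, the edge $(s,v)$ has zero capacity and hence zero flow, so by flow conservation no flow leaves $v$ along any original edge $(v,w)$ with $w \in B$. Consequently the final backward step $w \to v$ required at the end of a fixed $v$-path has no residual capacity, so no fixed $v$-path can exist initially.

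For the inductive step, assume no fixed $v$-path is present at the start of the current iteration, and suppose for contradiction that after augmenting along the price path $P_p$ (whose first internal vertex is some $v^* \in V_p$) a fixed $v$-path $P_f$ arises for some $v \in V_p$. Because $P_f$ did not exist in the old residual graph, it must traverse at least one newly-created residual edge; these are precisely the reverses of edges of $P_p$, so each such edge has both endpoints on $P_p$. Let $w$ be the first vertex of $P_f$ strictly after $s$ that also lies on $P_p$. Such a $w$ exists, because the first new residual edge traversed by $P_f$ witnesses an intersection, and $w \neq s$ since $P_f$ is simple. Moreover, the prefix $P_f[s \to w]$ uses only old residual edges, because any new residual edge used earlier in $P_f$ would expose an intersection with $P_p$ strictly before $w$, contradicting the choice of $w$.

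The final step is to concatenate $P_f[s \to w]$ with the suffix $P_p[w \to t]$ to obtain an $s$-$t$-walk in the old residual graph whose very first edge is $(s, u_1)$ for the fixed-price vertex $u_1 \in V_f$ that starts $P_f$. By the simplicity of $P_f$ and of $P_p$ together with $w \neq s$, the vertex $s$ occurs uniquely at the start of this walk, so standard cycle-shortcutting extracts a simple $s$-$t$-path that still begins with $(s, u_1)$. This would be a fixed $s$-$t$-augmenting path present at the start of the current iteration, directly contradicting Lemma~\ref{t:pricePath}. The main obstacle I expect is precisely this final shortcutting argument: one has to ensure that the initial edge $(s, u_1)$ is not collapsed when cycles are removed from the walk, and this is exactly where the uniqueness of $s$ in the concatenated walk must be used in an essential way.
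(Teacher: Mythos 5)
Your proof is correct and follows essentially the same route as the paper's: an induction over the while-loop iterations in which a newly created fixed $v$-path is spliced at its first intersection with the price path $P_p$ to yield a fixed $s$-$t$-path that must already have existed at the start of the iteration, contradicting the invariant behind Lemma~\ref{t:pricePath}. The paper merely organizes the case analysis a little differently (first ruling out shared edges with $P_p$, then forcing $v \in P_p$ and splicing at $v$), while your explicit base case and cycle-shortcutting step just make the same argument more careful.
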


The proof of Lemma \ref{t:noVpath} is similar to the proof of Lemma
\ref{t:pricePath} and will appear in the full version. As there is no
augmenting path from $s$ to any priceable vertex at any time, the
following lemma is now obvious.
\begin{lemma}
$\cov_{ALG}$ includes all priceable vertices. 
\end{lemma}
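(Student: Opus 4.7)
The plan is to apply the standard max-flow/min-cut characterization of bipartite vertex cover: a vertex $v \in A$ belongs to the min cover determined by the current flow $\phi$ exactly when $v$ is unreachable from $s$ in the residual graph, i.e., when no augmenting $s$-$v$-path exists. I therefore need to show that, at termination of Algorithm~\ref{algo:bip:onepartition}, every $v \in V_p$ satisfies this unreachability condition.

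Fix an arbitrary $v \in V_p$ and suppose for contradiction that $P$ is an augmenting $s$-$v$-path in the final residual graph. First I observe that Step~\ref{step:aug} preserves the invariant $p(v') = \phi(s, v')$ for every $v' \in V_p$: at each iteration $p(v')$ is only ever increased, and then only by exactly the amount of flow pushed along an augmenting path that begins with the edge $(s, v')$, whose capacity is raised in lockstep. Consequently $(s, v')$ is saturated for every priceable $v'$, so no augmenting path leaving $s$ can start by traversing $(s, v')$ for any $v' \in V_p$. Hence the first edge of $P$ is $(s, v'')$ for some fixed-price $v'' \in A \setminus V_p$.

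Next I examine how $P$ can enter $v$. Since $v \in A$, the only residual edges into $v$ are $(s, v)$ itself and backward edges $(w, v)$ with $w \in B$ adjacent to $v$ in $G$, the latter present iff $\phi(v, w) > 0$. Because $P$ starts at $v'' \neq v$, it must reach $v$ via such a backward edge, i.e., by decreasing flow over some original edge $(v, w)$ with $w \in B$. But this is precisely the definition of a fixed $v$-path, contradicting Lemma~\ref{t:noVpath}. Therefore no augmenting $s$-$v$-path exists, $v$ remains unreachable from $s$, and $v \in \cov_{ALG}$.

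The main obstacle is verifying the invariant $p(v') = \phi(s, v')$ on the priceable side; once this is in hand, the argument is a transparent combination of the min-cut characterization and Lemma~\ref{t:noVpath}, with Lemma~\ref{t:pricePath} providing the background guarantee that the while-loop's augmentations all conform to this invariant (each augmenting path used is a price path, so it begins with an edge $(s, v')$ whose capacity is precisely what the algorithm inflates).
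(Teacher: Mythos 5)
Your proof is correct and follows essentially the same route as the paper: the price-equals-flow invariant keeps every edge $(s,v')$, $v'\in V_p$, saturated, so the only way a priceable vertex could leave the cover is via an augmenting $s$-$v$-path entering $v$ through a backward edge from $B$ — exactly a fixed $v$-path, which Lemma~\ref{t:noVpath} rules out. You merely spell out the details that the paper compresses into the discussion preceding Lemma~\ref{t:noVpath} and its remark that the claim is then obvious.
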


{\em Proof of Theorem~\ref{theo:bip:one_side}.}\, Finally, we can
proceed to argue that the computed pricing is optimal. Suppose that
after executing Algorithm~\ref{algo:bip:onepartition} we increase
$p(v)$ over $\phi(s,v)$ for any priceable vertex $v$. As we are at the
end of the algorithm, it does not allow us to increase the flow in the
same way. Thus, the adjustment creates slack capacity on all the edges
$(s,v)$ for any $v \in V_p$ and causes every priceable vertex to leave
$\cov_{ALG}$. The new cover must be the cheapest cover that excludes
every priceable vertex, i.e. it must be $\cov_0$ and have cost
$c_0$. As we have not increased the flow, we know that the cost of
$\cov_{ALG}$ is also $c_0$. Note that before starting the while-loop
the cover was $\cov_n$ of cost $c_n$. As all flow increase in the
while-loop was made over price paths and all the priceable vertices
stay in the cover, the revenue of $\cov_{ALG}$ must be $c_0 - c_n =
\Delta_n$. This is an upper bound on the optimum revenue, and hence
the price function $p_{ALG}$ derived with the algorithm is
optimal. Finally, notice that adjusting the price of the priceable
vertices in each iteration is not necessary. We can start with
computing $\cov_n$ and for the remaining while-loop set all prices to
$+\infty$. This will result in the desired flow, which directly
generates the final price for every vertex $v$ as flow on
$(s,v)$. Hence, we can get optimal prices with an adjusted run of the
standard polynomial time algorithm for maximum flow in $G_d$. This
proves Theorem~\ref{theo:bip:one_side}.\qed
%
%The proof is deferred to the full version.
%Algorithm~\ref{algo:bip:twopartition} can be found in
%the appendix together with the proof of the following theorem.
%
\begin{theorem}
\label{theo:bip:two_side}
There is a polynomial time $2$-approximation algorithm for bipartite
\stackVC.
\end{theorem}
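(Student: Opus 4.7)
The plan is to reduce to the one-sided case by splitting the priceable vertex set along the bipartition into $V_p^A = V_p \cap A$ and $V_p^B = V_p \cap B$, and running the exact algorithm of Theorem~\ref{theo:bip:one_side} twice. In the first auxiliary instance only $V_p^A$ is declared priceable while every vertex of $V_p^B$ is treated as a fixed-cost vertex with cost $+\infty$, so that it never appears in an optimal cover; the second instance is symmetric. Let $r_1$ and $r_2$ denote the revenues of the two resulting optimal one-sided pricings. The algorithm returns whichever is larger, extended by assigning $+\infty$ to the omitted side's priceable vertices in the original graph; polynomial running time is inherited from Theorem~\ref{theo:bip:one_side}.

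To obtain the claimed $2$-approximation it suffices to show $r_1 + r_2 \ge r^*$, since then $\max(r_1,r_2) \ge r^*/2$. Fix an optimal pricing $p^*$, follower cover $S^*$, and revenue $r^*$, and decompose $r^* = r_A^* + r_B^*$ according to whether the contributing priceable vertex lies in $A$ or $B$. A useful structural observation is that the assumption of a feasible all-fixed cover forces $E \cap (V_p^A \times V_p^B) = \emptyset$: an edge both of whose endpoints are priceable could never be covered using only $V_f$. Hence every edge is of one of three types, entirely inside $V_f$, between $V_p^A$ and $V_f \cap B$, or between $V_f \cap A$ and $V_p^B$, and the prices on one side influence only the covering of the adjacent edges.

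Consider now the candidate restricted pricings $\tilde p^{(1)}$, obtained from $p^*$ by setting $V_p^B$-prices to $+\infty$ and keeping $V_p^A$-prices unchanged, and its symmetric counterpart $\tilde p^{(2)}$. Since the one-sided algorithm is exact for its restricted instance, $r_1 \ge r(\tilde p^{(1)})$ and $r_2 \ge r(\tilde p^{(2)})$, so it suffices to verify $r(\tilde p^{(1)}) + r(\tilde p^{(2)}) \ge r^*$. The main obstacle is that raising the $V_p^B$-prices to $+\infty$ may cause the follower to substitute away from some $V_p^A$-vertices previously chosen in $S^*$, so $r(\tilde p^{(1)})$ may well drop below $r_A^*$, and symmetrically for $r(\tilde p^{(2)})$. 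The approach I would take to absorb this loss is via the max-flow/min-cut view underlying Theorem~\ref{theo:bip:one_side}: in the associated flow network $G_d$ the $V_p^A$-vertices contribute source-side capacities and the $V_p^B$-vertices contribute sink-side capacities, and $S^*$ corresponds to a min-cut whose total saturated capacity on the priceable edges equals $r_A^* + r_B^*$. Raising one side's capacities to $+\infty$ forces the flow to reroute through the opposite priceable side or through fixed edges, and an uncrossing/flow-decomposition argument applied to the min-cuts associated with $\tilde p^{(1)}$ and $\tilde p^{(2)}$ is expected to yield the additive bound $r(\tilde p^{(1)}) + r(\tilde p^{(2)}) \ge r^*$ and complete the proof.
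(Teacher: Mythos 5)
Your algorithm is exactly the paper's: split $V_p$ along the bipartition, price the opposite side at $+\infty$, run the exact one-sided algorithm of Theorem~\ref{theo:bip:one_side} twice, and keep the better of the two pricings, aiming for $r_1+r_2\ge r^*$. But the one step that carries the whole proof --- the additive bound $r(\tilde p^{(1)})+r(\tilde p^{(2)})\ge r^*$ --- is never established: you explicitly say that an ``uncrossing/flow-decomposition argument \dots\ is expected to yield'' it. That is a genuine gap, not a routine omission, because it is the only non-trivial claim in the theorem. The paper closes it with the stronger per-side statement $r_1\ge r_A^*$ and $r_2\ge r_B^*$: raising the prices of $V_p\cap B$ to $+\infty$ while keeping $p^*$ on $V_p\cap A$ can only (weakly) increase the revenue collected from the $A$-side, since making $B$-vertices more expensive can only push additional $A$-vertices \emph{into} the follower's minimum cover, never out of it. Your stated worry that $r(\tilde p^{(1)})$ ``may well drop below $r_A^*$'' is in fact unfounded, but the proposal neither refutes that worry nor supplies the weaker additive bound by any concrete argument.

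For the record, the missing step has a short exchange proof in the spirit you gesture at. Let $S_1$ be the follower's cover under $p^*$ and $S_2$ her cover under $\tilde p^{(1)}$ (so $S_2\cap V_p\cap B=\emptyset$), and set $U=\bigl((S_1\cup S_2)\cap A\bigr)\cup\bigl(S_1\cap S_2\cap B\bigr)$ and $D=\bigl(S_1\cap S_2\cap A\bigr)\cup\bigl((S_1\cup S_2)\cap B\bigr)$. Both are vertex covers, and under $p^*$ their weights satisfy $w(U)+w(D)=w(S_1)+w(S_2)$; minimality of $S_1$ gives $w(D)\ge w(S_1)$, hence $w(U)\le w(S_2)$. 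Since $U\cap B\subseteq S_2\cap B$ avoids $V_p\cap B$, the weight of $U$ is the same under $\tilde p^{(1)}$, so $U$ is a minimum cover under $\tilde p^{(1)}$, and it contains $S_1\cap A$, so with the leader-favorable tie-breaking the revenue under $\tilde p^{(1)}$ is at least $r_A^*$; exactness of the one-sided algorithm then gives $r_1\ge r_A^*$, and symmetrically $r_2\ge r_B^*$. Had your proposal contained this (or the equivalent parametric min-cut monotonicity argument) it would be complete; as written, the decisive inequality is only conjectured. Your side observation that no edge joins $V_p\cap A$ to $V_p\cap B$ is correct but is not by itself enough, since the two sides still interact through shared fixed-cost vertices.
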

In Theorem~\ref{theo:bip:two_side} we use the previous analysis to get
a $2$-approximation of the optimum revenue for general bipartite
\stackVC. This results in a $2k$-approximation for any number of $k$
followers. In contrast, the analysis of the single-price algorithm is
tight even for one follower and all priceable vertices in one
partition. Moreover, bipartite \stackVC\ for at least two followers is
NP-hard by a reduction from the highway pricing problem~\cite{BK06}.

\vskip-0.3cm
\section{Open problems}
\label{conclusions}

There are a number of important open problems that arise from our
work. We believe that the single-price algorithm is essentially best
possible even for a single follower and general Stackelberg pricing
games. However, there is no matching logarithmic lower bound, and the
best lower bound remains APX-hardness from~\cite{Cardinal07}. In
addition, we believe that for weighted followers a better upper bound
than $m^2$ is possible, which would decrease the gap to the
$\Omega(m^\varepsilon)$ lower bound we observed. More generally,
extending other algorithm design techniques to cope with pricing
problems is a major open problem.
\begin{comment}
We
have sucessfully solved some special cases of problems that allow to
apply dynamic programming. It would be interesting to see if a dynamic
programming approach can be used for more general classes of problems.
% We have presented how ideas related to LP-duality can be used in the
% case of bipartite vertex cover. 
In addition, it remains to be shown if our ideas for bipartite vertex
cover can be adjusted to cope with minimum cut or more general graph
partitioning problems.
\end{comment}

% \vskip-0.3cm

\end{document}